\algrenewcommand\algorithmicfor{\textbf{Foreach}}
\algrenewcommand\algorithmicwhile{\textbf{While}}
\algrenewcommand\algorithmicdo{\textbf{Do}}
 \gdef\xxxmark{%
   \expandafter\ifx\csname @mpargs\endcsname\relax 
     \expandafter\ifx\csname @captype\endcsname\relax 
       \marginpar{xxx}
     \else
       xxx 
     \fi
   \else
     xxx 
   \fi}
 \gdef\xxx{\@ifnextchar[\xxx@lab\xxx@nolab}
 \long\gdef\xxx@lab[#1]#2{{\bf [\xxxmark #2 ---{\sc #1}]}}
 \long\gdef\xxx@nolab#1{{\bf [\xxxmark #1]}}
\newcommand{\RNum}[1]{\uppercase\expandafter{\romannumeral #1\relax}}
\newcommand{\proofsketch}{\vspace*{-1ex} \noindent {\textit{ Proof Sketch.} }}
\begin{document}
\title{Simple and Near-Optimal Mechanisms For Market Intermediation}
\titlerunning{Simple and Near-Optimal Mechanisms For Market Intermediation}
\author{Rad Niazadeh$^{\dag}$, Yang Yuan$^{\dag}$ \and Robert Kleinberg$^{\dag}$
\thanks{All three authors were supported by NSF grant AF-0910940. Robert
Kleinberg was also supported by a Microsoft Research New Faculty Fellowship 
and a Google Research Grant.}}
\authorrunning{Simple and Near-Optimal Mechanisms For Market Intermediation}
\institute{$^{\dag}$Cornell University, Department of Computer Science.}
\toctitle{Simple and Near-Optimal Mechanisms For Market Intermediation}
\maketitle
\begin{abstract}
A prevalent market structure in the Internet economy consists 
of buyers and sellers connected by a platform (such as Amazon
or eBay) that acts as an intermediary and keeps a share of the
revenue of each transaction. 
While the optimal mechanism that maximizes the intermediary's profit
in such a setting may be quite complicated, the mechanisms 
observed in reality are
generally much simpler, e.g.,
applying an affine function to the 
price of the transaction as the intermediary's fee.
\citet{LN07,LN13} initiated the study of such fee-setting 
mechanisms in two-sided markets, and we continue this 
investigation by addressing the question of when an 
affine fee schedule is approximately optimal for worst-case seller distribution. On one hand
our work supplies non-trivial sufficient conditions on the buyer side (i.e. linearity of marginal revenue function, or MHR property of value and value minus cost distributions)
under which an affine fee schedule can obtain a constant
fraction of the intermediary's optimal profit for all seller distributions. On the other hand we complement our result by showing that proper affine fee-setting mechanisms (e.g. those used in eBay and Amazon selling plans) are \emph{unable} to extract a constant fraction of optimal profit in the worst-case seller distribution. As subsidiary results we also show there exists a constant gap between maximum surplus and maximum revenue under the aforementioned conditions. Most of the mechanisms that we propose are also prior-independent with respect to the seller, which signifies the practical implications of our result.
\end{abstract}
\section{Introduction}
\label{sec:intro}

A prevalent market structure in the Internet economy consists 
of buyers and sellers connected by a platform (such as Amazon
or eBay) that acts as an intermediary and keeps a share of the
revenue each time a buyer makes a purchase from a seller. 
What mechanism should the intermediary use to maximize its
profit? In cases the optimal mechanism is unacceptably 
complicated, can simpler mechanisms closely approximate 
the profit of the optimal mechanism?
We approach these questions using the framework of 
Bayesian mechanism design and worst-case approximation guarantees.

To motivate our investigation it is instructive to consider
the transaction fees that are commonly used by intermediaries in
reality. For example, when an item is sold on eBay using a
fixed price listing (as opposed to an auction), the seller is
charged a fee of $0.3+0.1P$, where $P$ is the total amount of 
the sale in 
dollars\footnote{See {\tt http://pages.ebay.com/help/sell/fees.html}}. 
Amazon uses a similar pricing rule for individual sellers, which is 
$0.99+\alpha P$, where $\alpha$ is a real number
determined by the category of the product, typically ranging
from $8\%$ to $15\%$ \footnote{See {\tt http://services.amazon.com/selling/pricing.htm}}.
Generalizing these examples, we say that 
a \emph{fee-setting mechanism} is one in which the intermediary
names a function $w(\cdot)$, the seller names a price $P$, and
the buyer chooses whether or not to take the item at price $P$.
If the transaction takes place, then the intermediary keeps
$w(P)$ and pays $P-w(P)$ to the seller. Otherwise, no money
changes hands. We refer to $w$ as the \emph{fee schedule} of
the mechanism. We say that $w$ is \emph{affine}
if it can be represented in the form $w = (1-\alpha) P+\beta$ for some constants
$\alpha,\beta$,  and we say that an affine schedule $w(P)=(1-\alpha) P+\beta$ is \emph{proper} if  $\alpha\in [0,1], \beta\geq 0$. Note that the fee schedule used by eBay and Amazon (and many other 
intermediaries, for example real estate brokers) 
are affine and proper. 

\citet{LN07,LN13} initiated the study of fee-setting mechanisms in
two-sided markets. They showed that if it is possible for the intermediary
to choose a mechanism that implements a given allocation rule in 
Bayes-Nash equilibrium, then there is a fee-setting mechanism
that does so. They also provided necessary and sufficient conditions
for the intermediary's optimal mechanism to be implemented by an
affine fee-setting mechanism. The necessary and sufficient condition
discovered by \citet{LN07,LN13} requires the seller's cost to be 
drawn from a generalized Pareto distribution (see Definition~\ref{gpd}
below). Using results from extreme value theory, they show that in the 
limit as only the sellers with lowest cost and the buyers with highest
value enter the market, the conditional distribution of the seller's 
cost (conditional on entering the market) approaches a generalized
Pareto distribution, thus providing a partial justification for the
prevalence of affine fee-setting mechanisms in two-sided markets.

Our work draws inspiration from the aforementioned work 
of \citet{LN07,LN13} and seeks a different type of justification
for affine fee-setting mechanisms by asking the question, ``When
are affine fee-setting mechanisms approximately optimal?'' 
Our results pertain to the case 
when the buyer's virtual
valuation function is affine, which 
is the characterization of
generalized Pareto distributions, in 
ex-post IR setting. 
We first show that a specific choice of 
seller prior-independent affine
fee schedule $w(P)=P-\phi_{\mathcal{B}}(P)$ is ex-post IR for every possible seller's
distribution, where prior-independent means the fee schedule only
depends on the buyer's value distribution but not on the seller's
cost distribution. 
Moreover, this affine fee schedule also
achieves a constant-approximation to the 
maximum  surplus --- and hence, also, a constant-approximation
to the optimal revenue. The approximation factor depends on the
exponent of the buyer's generalized power distribution but it is no more than $4$ comparing to optimal intermediary's profit when the buyer's PDF is monotone.
Our results complement the results of \citet{LN07, LN13} in the sense that 
combined with their results, we show that
if either of the buyer side or the seller side 
has affine virtual valuation function, and 
the other side follows regular distributions,
then the best affine fee schedule 
guarantees either optimal or near optimal revenue,
which provides explanation for the phenomenon that
affine fee schedule is widely used in the daily life.

Our second main result explores the setting that the difference between the values of the seller and the buyer follows MHR distribution, which indicates that the surplus and revenue are constant approximation to each other. Under this assumption, we may further extend the buyer's distribution to MHR distributions, and still get constant approximation ratio with constant (and hence affine) fee schedule.

Intriguingly, without proper MHR assumptions the ex-post IR affine fee schedule in the aforementioned
approximation result is \emph{not} proper; 
in contrast to intermediaries in typical
two-sided markets in practice, the intermediary in our
approximation result may charge a transaction fee which is a 
decreasing function of the seller's price. Our third
main result shows that this reliance on improper affine
fee schedules is unavoidable: even when the buyer's value
is assumed to be uniformly distributed on $[0,1]$, there
exist seller cost distributions for which no proper affine
fee-setting mechanism can achieve a constant-approximation
to the optimal revenue.

In the special case that
the buyer's distribution is 
uniform $[0,1]$, we propose an improved mechanism, 
which gives $3$-approximation fee-setting mechanism
to the optimal revenue.
We also prove that if one needs a prior independent 
affine fee schedule when the buyer's distribution 
is uniform $[0,1]$, then 
$\alpha-\beta=1$ is 
necessary. Moreover, among all the prior independent 
affine fee schedule, 
$w(P)=1-P$ gets the best approximation ratio $8$ comparing to maximum surplus.
From this perspective, our proposed affine fee schedule is 
optimal.
Finally, our proof techniques reveal the fact that there exists a constant gap between optimal revenue and maximum surplus  when buyer's distribution is generalized Pareto distribution as a side dish.

The primary source of difficulty in proving these results is that
fee-setting mechanisms are not Bayes-Nash incentive compatible (BNIC). 
Thus, deriving a revenue guarantee for the intermediary requires
first solving for the Bayes-Nash equilibrium of the mechanism.
Our paper adopts the approach introduced 
by \citet{LN07,LN13} for deriving the Bayes-Nash
equilibrium. The technical heart of our paper lies in
some surprising connections between the affine fee schedule, Bayes-Nash equilibrium payment function, and the cumulative hazard rate function.
These connections are non-trivial, which
make the proof succinct while the results
are still general.
Starting from that, we got expressions of 
the three quantities of interest ---
the maximum surplus, the optimal revenue, and the affine fee-setting mechanism's revenue
--- in a closely related form.
Then, leveraging our assumption
that the buyer's virtual value function is affine, we are able
to choose an affine fee schedule 
to approximate the optimal revenue.


\subsection{Related Work}
\citet{MS83} showed that for one seller one buyer setting, 
if there is no intermediary
between them, then no incentive-compatible individually rational 
mechanism can produce post efficient outcome, where post efficient outcome
means the trade should take place whenever the buyer's value is larger than 
the seller's cost. Based on this impossibility result, they also
considered the case that intermediary is allowed, and
both the seller and the buyer can trade with the intermediary only. 

\citet{DGTZ12} 
studied the double auction, in which the 
intermediary designs mechanism for the 
buyers and the sellers to extract maximum revenue. 
In the paper, they provided optimal or near-optimal
mechanisms for  
both single dimensional and multi-dimensional environments with continuous or discrete distributions.
\citet{JW12} studied the same problem 
with single unit-demand buyer and multiple sellers, and gave a characterization for the optimal solution in this setting. Since the optimal mechanism is generally hard to implement, they also proposed several approximation mechanisms, including picking the best item and sell, or using anonymous virtual reserve price combined with greedy algorithm. 

Contract problem has a similar setting as the intermediary problem:
the principle (intermediary) proposed a contract ($w(\cdot)$ function) to the 
agent (the seller), and the agent will choose his action and get 
a output ($P$ payment), and then give the principle $w(P)$, keep
$P-c$ as its utility. 
Previously, researchers have found evidence showing that linear contract is 
powerful in this setting.
\citet{PBS07} studied linear contract problem, and found that 
linear contracts are common in practice not only because the simplicity, 
but also due to the fact that the optimal linear contract guarantees
at least $90\%$ of the fully optimal contract in the canonical moral hazard setting.
\citet{GC13} proved that under mild assumptions, 
the optimal contract is actually linear.

Simple mechanisms and their approximation ratios to the corresponding optimal
mechanisms have been an important research topic in the literature. For example, 
\citet{BK94} showed that in the i.i.d., regular, 
single dimensional setting, second price auction 
with $n+1$ bidder will give more revenue
than the optimal auction with $n$ bidders.
\citet{HR10} investigated the single dimensional 
setting where bidders have independent valuations, 
and showed that VCG with anonymous reserve price
can achieve $4$-approximation to the optimal 
revenue. \citet{DRY10} considered the auctions that 
are prior-independent, in the sense that
the auction will achieve good approximation 
to the optimal revenue while the specific value distributions
of the bidders
are not used in the auction.

\section{Preliminaries}
In this paper we consider the problem of single-item trade, in which  a profit-maximizing broker mediates the exchange between a buyer and a seller. In particular, we follow the Bayesian mechanism design approach wherein a Bayesian designer looks to find the trade mechanism with the maximum possible revenue in expectation over the distributions from which the preferences of the buyer and seller are drawn. We assume the preferences of buyer and seller are private values drawn from  product distributions, which are common knowledge. 
\subsection{Setting, notations, solution concepts, and basics}
We assume the reader is familiar with the general model of single dimensional mechanism design for risk neutral agents, including the definitions of incentive compatibility and individual rationality, basics of Bayesian mechanism design, and adapting these concepts to the exchange setting (see Appendix~\ref{mechbasic}). Still, it is worth identifying a few aspects of our notations and terminology.

Suppose the seller $\mathcal{S}$ has a private cost $c$ 
and the buyer $\mathcal{B}$ has a private value $v$ for the item. We use $F$ (and $f$) to denote 
the CDF(and PDF) of $v$, 
and $G$ (and $g$) to denote
the CDF (and PDF) of $c$. Unless stated otherwise, we assume the support of $f$ is $[0,\overline{v}]$ and the support of $g$ is $[0,\overline{c}]$. We define the marginal revenue functions (a.k.a. \textit{virtual preferences}) of seller and buyer as follows. Let $\phi_{\mathcal{S}}(c)\triangleq c+\frac{G(c)}{g(c)}$   be defined as the virtual cost of the seller and $\phi_{\mathcal{B}}(v)\triangleq v-\frac{1-F(v)}{f(v)}$ be defined as the virtual value of the buyer. We also define buyer's \textit{hazard rate}, $h_B(v)\triangleq\frac{f(v)}{1-F(v)}$, and \textit{cumulative hazard rate}, $H_B(v)\triangleq\int_{0}^{v}h_B(z)dz$. It can be easily shown that $1-F(v)=e^{-H_B(v)}$, which is a famous property of cumulative hazard rate. 

We say a buyer (or a seller) is \textit{buyer-regular} (or \textit{seller-regular}) if $\phi_{\mathcal{B}}(v)$ (or $\phi_{\mathcal{S}}(c))$ is monotone non-decreasing. A buyer's distribution is said to be \textit{MHR (monotone hazard rate)} if $h_\mathcal{B}(v)$ is monotone non-decreasing (or equivalently $H_\mathcal{B}(v)$ is convex). For a regular buyer $v$, \textit{monopoly price} is defined to be $\eta_v=\phi_{\mathcal{B}}^{-1}(0)$ (i.e. if $v\geq\eta_v$ virtual value is non-negative). Moreover, \textit{monopoly revenue} $R_{\eta}^{v}\triangleq\eta_v(1-F(\eta_v))$ is the expected revenue one gets by posting $\eta_v$ to a buyer with value $v$.
\subsection{Characterization of distributions with affine virtual value/cost}
\label{paretosec}
A critical constraint throughout this paper, which is appearing in different forms in many of our results and background results on this subject, is when the buyer or seller has an affine virtual preference, i.e. when $\phi_\mathcal{S}(c)=xc+y$ or when $\phi_\mathcal{B}(v)=xv-y$ for $x,y\in\mathbb{R}$.  We now characterize the buyer distributions and seller distributions with the above property as follows.
\begin{definition}
\label{gpd}
A generalized Pareto distribution $F$ with parameters $\mu,\lambda,$ and $\xi$, where $\mu,\lambda,\xi\in\mathbb{R}$, $\lambda > 0$ and $\xi \geq  0$, is defined by the following cumulative density function. 
\begin{equation}
F(x)= \left\{ \begin{array}{rl}
1-(1-\xi \lambda (x-\mu))^{\frac{1}{\xi}} &\mbox{ if $\xi>0$} \\
1-e^{\lambda(x-\mu)} &\mbox{~if $\xi=0$}
\end{array} \right.\nonumber
\end{equation}
and the support is bounded and equal to $[\mu,\mu+\frac{1}{\xi\lambda}]$ if $\xi >0$, and is unbounded and equal to $[\mu,+\infty)$ if $\xi=0$. When $\xi>0$ we refer to the distribution as \emph{generalized power distribution} and when $\xi=0$ we refer to it as \emph{generalized exponential distribution}.
\end{definition}
It is worth mentioning that the family of Pareto distributions are skewed, heavy-tailed distributions that are sometimes used to model the distributions of incomes and other financial variables. For the cost of the seller, we define a similar distribution as follows.
\begin{definition}
\label{reverspareto}
The seller with cost $c$ has a reverse-generalized Pareto distribution with parameters $\mu, \lambda,$ and $\xi$ if $-c$ is a random variable drawn from a generalized Pareto distribution with parameters $\mu,\lambda$ and $\xi$.
\end{definition}

For generalized Pareto distribution family, one can easily prove the following corollary by definition.
\begin{corollary}
\label{cor:gpd}
If $v$ is drawn from a generalized Pareto distribution with parameters $\mu, \lambda,$ and $\xi$, then $\phi_B(v)=(1+\xi)v-(\frac{1}{\lambda}-\xi\mu)$. If $c$ is drawn from a reverse-generalized Pareto distribution with parameters $\mu, \lambda,$ and $\xi$, then $\phi_S(c)=(1+\xi)c+(\frac{1}{\lambda}+\xi\mu)$.
\end{corollary}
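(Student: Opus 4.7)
The plan is to directly substitute the CDF and PDF of the generalized Pareto distribution into the definitions $\phi_\mathcal{B}(v) = v - (1-F(v))/f(v)$ and $\phi_\mathcal{S}(c) = c + G(c)/g(c)$ and simplify; this is a routine calculation, so the proof essentially reduces to careful bookkeeping in two cases ($\xi > 0$ and $\xi = 0$) and two directions (buyer and seller).

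First I would handle the buyer side in the case $\xi > 0$. From the definition, $1 - F(v) = (1 - \xi\lambda(v-\mu))^{1/\xi}$, and differentiating yields $f(v) = \lambda(1 - \xi\lambda(v-\mu))^{1/\xi - 1}$. The key observation is that the ratio telescopes: one factor of $(1-\xi\lambda(v-\mu))$ remains, so $(1-F(v))/f(v) = \lambda^{-1}(1-\xi\lambda(v-\mu))$, which is affine in $v$. Substituting into $\phi_\mathcal{B}(v) = v - (1-F(v))/f(v)$ and collecting terms gives the claimed expression. The $\xi = 0$ case is even easier: since the exponential distribution is memoryless, $(1-F(v))/f(v) = 1/\lambda$ is constant, and $\phi_\mathcal{B}(v) = v - 1/\lambda$, which is exactly the $\xi \to 0$ limit of the general formula, so the two regimes patch together.

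For the seller side, I would exploit the symmetry baked into Definition~\ref{reverspareto}: if $c$ has a reverse-generalized Pareto distribution with parameters $(\mu,\lambda,\xi)$, then $-c$ has a generalized Pareto distribution with the same parameters. Writing $F_\star$ and $f_\star$ for the CDF and PDF of $-c$, this yields $G(c) = 1 - F_\star(-c)$ and $g(c) = f_\star(-c)$, hence $G(c)/g(c) = (1-F_\star(-c))/f_\star(-c)$. Plugging $-c$ in place of $v$ into the affine expression just derived gives $G(c)/g(c) = \lambda^{-1} + \xi c + \xi\mu$, and adding $c$ yields $\phi_\mathcal{S}(c) = (1+\xi)c + (\lambda^{-1} + \xi\mu)$.

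There is no real obstacle here; the only subtlety is tracking the minus signs correctly when transferring the buyer's computation to the seller via the substitution $v \mapsto -c$, so I would write that step out explicitly rather than appeal to "symmetry" to avoid any sign error.
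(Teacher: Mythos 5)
Your plan---direct substitution of the generalized Pareto CDF and PDF into the definitions of $\phi_\mathcal{B}$ and $\phi_\mathcal{S}$---is precisely the calculation the paper intends (the text supplies no proof, merely remarking that the corollary follows from the definitions). Your intermediate steps are correct: the telescoping $(1-F(v))/f(v) = \lambda^{-1}\bigl(1-\xi\lambda(v-\mu)\bigr)$ is right, and the seller-side reduction via $G(c)/g(c) = (1-F_\star(-c))/f_\star(-c)$ is carried out cleanly, giving $\phi_\mathcal{S}(c)=(1+\xi)c+(\lambda^{-1}+\xi\mu)$ as stated. The one place you must not hand-wave is the buyer's ``collecting terms'' step, because it does \emph{not} reproduce the formula printed in the corollary. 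Carrying it out,
\begin{equation*}
\phi_\mathcal{B}(v) = v - \frac{1}{\lambda}\bigl(1-\xi\lambda(v-\mu)\bigr) = v - \frac{1}{\lambda} + \xi(v-\mu) = (1+\xi)v - \Bigl(\frac{1}{\lambda} + \xi\mu\Bigr),
\end{equation*}
with $+\xi\mu$, whereas the corollary has $-\xi\mu$. This is a sign error in the paper, not in your reasoning. Two sanity checks confirm it: taking $\xi=\lambda=\mu=1$ yields the uniform distribution on $[1,2]$, whose virtual value is $2v-2$, matching your formula and not the paper's; and the identity $\phi_\mathcal{S}(c)=-\phi_\mathcal{B}(-c)$, which underlies your seller derivation, forces the constants in the two affine expressions to be negatives of each other, again consistent with your formula but not with the printed one. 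Your $\xi=0$ observation cannot detect the discrepancy because the offending $\xi\mu$ term vanishes there; do not let that ``patching'' reasoning lull you into skipping the $\xi>0$ algebra.
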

We can also prove that the inverse is true, i.e. affine virtual preferences implies the generalized Pareto distribution (To prove this lemma, simply solve the corresponding
differential equations coming from the definitions, which we omit
here)
\begin{lemma}
\label{lemma:gpd}
A buyer (or seller) has affine virtual value (or cost) only if its value (or cost) is drawn from  a generalized Pareto distribution (or reverse-generalized Pareto distribution).
\end{lemma}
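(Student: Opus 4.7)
The plan is to turn the affine virtual preference hypothesis into a separable first-order ODE for the survival function $S(v)\triangleq 1-F(v)$ (respectively for the CDF $G(c)$), solve the ODE in closed form, and then match the resulting expression against the CDF given in Definition~\ref{gpd} by reading off the parameters $(\mu,\lambda,\xi)$.

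For the buyer side, assume $\phi_{\mathcal{B}}(v)=xv-y$. Using $f(v)=-S'(v)$, the identity $\phi_{\mathcal{B}}(v)=v-\frac{S(v)}{-S'(v)}$ rearranges to the separable ODE
\begin{equation}
\frac{d}{dv}\ln S(v)\;=\;\frac{-1}{(1-x)v+y}.\nonumber
\end{equation}
I would then split into two cases. When $x\neq 1$, integrating and imposing $S(0)=1$ (the support begins at $0$) yields $S(v)=\bigl(1+\tfrac{(1-x)v}{y}\bigr)^{-1/(1-x)}$, which is exactly the survival function of a generalized power distribution of Definition~\ref{gpd} with $\mu=0$, $\xi=x-1$, and $\lambda=1/y$; one checks that $y/(x-1)=1/(\xi\lambda)$ so that the support endpoint matches. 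When $x=1$, the ODE integrates to $S(v)=e^{-v/y}$, which is precisely the generalized exponential case $\xi=0$, $\lambda=1/y$, $\mu=0$.

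For the seller side I would repeat the argument symmetrically: the hypothesis $\phi_{\mathcal{S}}(c)=c+\frac{G(c)}{g(c)}=xc+y$ becomes $\frac{d}{dc}\ln G(c)=\frac{1}{(x-1)c+y}$, which is again separable. After integration, identifying $G$ as the CDF of $-\tilde c$ for $\tilde c$ drawn from the GPD of Definition~\ref{gpd} exhibits the reverse-generalized Pareto structure demanded by Definition~\ref{reverspareto}; the subcases $x=1$ and $x\neq 1$ are handled exactly as in the buyer case.

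The steps are essentially mechanical ODE integration; the main obstacle I anticipate is the bookkeeping needed to pin down the right correspondence $(x,y)\leftrightarrow(\mu,\lambda,\xi)$ so that both the functional form and the support endpoints of the solved ODE coincide with the GPD of Definition~\ref{gpd} (in particular, getting the sign of $\xi$ and the location of the upper endpoint right, and verifying that the admissible $(x,y)$ for a valid distribution on $[0,\overline{v}]$ forces $\xi\geq 0$ as required by that definition).
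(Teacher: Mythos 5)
Your proof is correct and follows exactly the route the paper indicates (the paper states the proof is simply solving the ODE coming from the definitions, and omits the details). Your ODE for $\ln S(v)$, the case split on $x=1$ versus $x\neq 1$, the matching $\xi=x-1$, $\lambda=1/y$, $\mu=0$, and the symmetric treatment of the seller side all check out; the point you flag at the end — that the admissible slope forces $\xi\geq 0$ and the support endpoint $\frac{y}{x-1}$ must coincide with $\frac{1}{\xi\lambda}$ — is indeed the only bookkeeping one needs to verify.
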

\section{Background results}
\label{backresult}
In this section, we investigate a class of mechanisms known as \emph{fee-setting}, introduced first by \citet{LN07}. In these mechanisms,  the intermediary asks the seller to bid her preferred price. If a buyer is willing to buy the item with this price, the intermediary takes a share of the trade money and gives the rest to the seller. Fee-setting mechanisms are simple, intuitive, easy to implement and more robust compared with Myerson's optimal mechanism. 
\subsection{Fee-setting exchange mechanisms}
\label{indopt}
 We first define a \emph{fee-setting mechanism} as follows.
\begin{definition}
A \emph{a fee-setting mechanism} with common knowledge \textit{fee schedule} $w(.):\mathbb{R}\rightarrow \mathbb{R}$ is an indirect mechanism for single buyer single seller exchange that runs the following steps subsequently:
\begin{itemize}
\item Trader asks the seller to bid its desired price $P$,
\item Trader then posts the price $P$ for the buyer,
\item If $v <P$ then the trade doesn't happen and all the payments will be zero.
\item If $v \geq P$ then the item will be traded, trader charges the buyer $P$, keeps its share of the trade $w(P)$, and pays $P-w(P)$ to the seller.
\end{itemize}
\end{definition}
 We now define \emph{affine fee setting exchange mechanisms} formally below. 
\begin{definition} An \emph{affine  fee setting exchange mechanism} with parameters $\alpha$ and $\beta$ is a fee-setting mechanism with affine fee schedule $w(P)\triangleq(1-\alpha)P+\beta$,  $\alpha, \beta \in \mathbb{R}$. 
\end{definition}
In this paper, we refer to an affine exchange fee mechanism with parameters $\alpha$ and $\beta$ as $\textrm{APX}(\alpha,\beta)$. We also define $\textrm{Rev-APX}(\alpha,\beta)$ to be the revenue of $\textrm{APX}(\alpha,\beta)$ when strategy profile of agents is a BNE (As we will discuss below, for affine exchange mechanisms there is a unique BNE  under the regularity assumption). Moreover, $\textrm{OPT-Rev}$ is defined to be the revenue of optimal Myerson mechanism, and $\textrm{OPT-Surplus}$ to be the surplus of VCG mechanism.
\subsection{Characterization of BNE strategy of the seller }
By a standard argument similar to those used in the Bayes-Nash equilibrium characterization of single dimension mechanism~\cite{RB78} one can characterize the BNE of the fee-setting mechanism. More formally, we have the following theorem, proved in~\cite{LN07}, that characterizes the BNE of the fee-setting mechanisms. 
\begin{theorem}\label{bnechar} \cite{LN07} 
Consider a fee-setting mechanism with differentiable fee-setting $w(.)$, then $P:{[0,\overline{c}]}\rightarrow\mathbb{R}^{+}$ is a BNE strategy of the seller if and only if:
\begin{itemize}
\item $P(c)$ is monotone non-decreasing with respect to $c$.
\item $P(c)$ satisfies $\phi_\mathcal{B}(P(c))=P(c)-\frac{P(c)-w(P(c))-c}{1-\frac{\partial w}{\partial p}(P(c))}$.
\end{itemize} 
\end{theorem}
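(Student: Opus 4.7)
The plan is to reduce the seller's problem to a single-agent optimization (since the buyer's best response in a fee-setting mechanism is the trivial threshold ``accept iff $v\geq P$'') and then combine a first-order condition with a monotone comparative statics argument. Concretely, given true cost $c$ and any differentiable fee schedule $w(\cdot)$, the seller's expected utility from bidding price $P$ is
\[
U(P,c) \;=\; \bigl(1-F(P)\bigr)\bigl(P - w(P) - c\bigr),
\]
and a bidding strategy $P(\cdot)$ is a BNE if and only if $P(c)\in\arg\max_P U(P,c)$ for every $c\in[0,\overline{c}]$.

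For the second bullet of the theorem I would differentiate $U$ in $P$, obtaining
\[
\frac{\partial U}{\partial P} \;=\; -f(P)\bigl(P-w(P)-c\bigr) + \bigl(1-F(P)\bigr)\bigl(1-w'(P)\bigr),
\]
set this to zero at any interior optimum, and rearrange to
\[
\frac{1-F(P)}{f(P)} \;=\; \frac{P - w(P) - c}{1 - w'(P)}.
\]
Substituting $\phi_\mathcal{B}(P)=P-(1-F(P))/f(P)$ then reproduces exactly the stated equation, giving necessity of the second condition.

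For necessity of monotonicity I would appeal to monotone comparative statics: since $\partial^2 U/\partial P\,\partial c = f(P)\geq 0$, the objective has increasing differences in $(P,c)$, and Topkis's theorem guarantees that any selection from the argmax is non-decreasing in $c$. The technically delicate part, and the main obstacle, is sufficiency: the FOC pointwise identifies only critical points, so one must argue that a monotone $P(\cdot)$ satisfying the FOC everywhere is actually a global best response for each $c$. The standard way to close this gap (and the route taken by \citet{LN07}) is a Myerson-style envelope argument: reinterpret the fee-setting mechanism as a direct mechanism for the seller with interim trade probability $q(c)=1-F(P(c))$ and interim payment $t(c)=(1-F(P(c)))(P(c)-w(P(c)))$. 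In a single-dimensional setting, incentive compatibility is then equivalent to monotonicity of $q$ (equivalently, of $P$) together with the payment identity, and differentiating the latter recovers the FOC via the envelope formula
\[
U(P(c),c) \;=\; U(P(\overline{c}),\overline{c}) + \int_{c}^{\overline{c}}\bigl(1-F(P(s))\bigr)\,ds,
\]
so monotonicity plus FOC is both necessary and sufficient for BNE, completing the characterization.
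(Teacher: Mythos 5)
Your derivation is correct and is precisely the standard single-dimensional BNE characterization --- first-order condition on the seller's expected utility $\bigl(1-F(P)\bigr)\bigl(P-w(P)-c\bigr)$, increasing differences in $(P,c)$ for monotonicity, and a Myerson/envelope argument for sufficiency --- which is exactly the argument the paper invokes only by citation to \citet{LN07} and the single-dimensional characterization of \cite{RB78} without writing it out. Nothing to flag beyond the routine caveats you already acknowledge (interior optima for the FOC, and deviations to prices outside the range of $P(\cdot)$ in the sufficiency step).
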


Although the characterization in Theorem~\ref{bnechar} is indirect, it has many nice implications in the special case of fee-settings mechanisms with affine fee schedule.
\begin{corollary}
\label{affine:bnechar}
Suppose in an exchange setting seller is regular. Then for an affine fee-setting mechanism with fee schedule $w(P)=(1-\alpha)P+\beta$,  $P(c)=\phi_\mathcal{B}^{-1}(\frac{c+\beta}{\alpha})$ is the unique BNE strategy of seller.
\end{corollary}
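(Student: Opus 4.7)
The plan is to deduce this corollary directly by specializing the general BNE characterization of Theorem~\ref{bnechar} to the affine fee schedule $w(P) = (1-\alpha)P + \beta$. First, I would compute the two pieces that appear in the characterization: $\frac{\partial w}{\partial P}(P) = 1-\alpha$, so $1 - \frac{\partial w}{\partial P}(P) = \alpha$; and $P - w(P) = \alpha P - \beta$, so the numerator $P - w(P) - c$ becomes $\alpha P - \beta - c$. Substituting these into the fixed-point equation of Theorem~\ref{bnechar} gives
\begin{equation*}
\phi_\mathcal{B}(P(c)) \;=\; P(c) - \frac{\alpha P(c) - \beta - c}{\alpha} \;=\; \frac{c+\beta}{\alpha}.
\end{equation*}
This is the whole algebraic content of the corollary: the cancellation that kills the $P(c)$ term on the right is exactly what makes the affine case special.

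Next I would invert this relation. Under the (buyer-side) regularity assumption that $\phi_\mathcal{B}$ is strictly monotone, $\phi_\mathcal{B}^{-1}$ exists and is uniquely defined on the image of $\phi_\mathcal{B}$, yielding the closed form $P(c) = \phi_\mathcal{B}^{-1}\!\left(\frac{c+\beta}{\alpha}\right)$. I would then verify that this candidate actually satisfies the first bullet of Theorem~\ref{bnechar} --- namely monotonicity of $P(\cdot)$ in $c$ --- which follows because both $c \mapsto \frac{c+\beta}{\alpha}$ (for $\alpha > 0$, which is the relevant regime for the mechanism to be meaningful) and $\phi_\mathcal{B}^{-1}$ are non-decreasing.

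For uniqueness, I would argue that Theorem~\ref{bnechar} already tells us \emph{any} BNE strategy must satisfy the fixed-point equation pointwise, and strict monotonicity of $\phi_\mathcal{B}$ forces the fixed point to be unique at each $c$. Hence the derived $P(c)$ is the \emph{only} BNE strategy, completing the corollary.

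The only potential subtlety --- and arguably the "hard part" of a careful write-up --- is the regularity hypothesis: the statement says the seller is regular, but the inversion of $\phi_\mathcal{B}$ really requires buyer-regularity. I would either invoke buyer-regularity as an implicit standing assumption of the affine setting (which is consistent with the generalized-Pareto characterization discussed in Section~\ref{paretosec}, where $\phi_\mathcal{B}$ is affine and hence automatically monotone), or note that for the affine virtual-value case the monotonicity of $\phi_\mathcal{B}$ is immediate and no further hypothesis is needed beyond $\alpha > 0$.
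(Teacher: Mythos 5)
Your proof is correct and takes essentially the same route as the paper: substitute the affine $w$ into the fixed-point equation of Theorem~\ref{bnechar}, observe the $P(c)$ terms cancel to give $\phi_\mathcal{B}(P(c))=\frac{c+\beta}{\alpha}$, and invert using regularity of $\phi_\mathcal{B}$. You also correctly flagged that the relevant hypothesis is really \emph{buyer}-regularity (invertibility of $\phi_\mathcal{B}$), not seller-regularity as the corollary's wording suggests --- indeed the paper's own proof says ``as buyer is regular,'' so the corollary statement contains a minor misstatement that your write-up would fix.
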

\begin{proof}
From Theorem~\ref{bnechar}  we know in any BNE, we have
\begin{equation}
\phi_\mathcal{B}(P(c))=P(c)-\frac{P(c)-w(P(c))-c}{1-\frac{\partial w}{\partial p}(P(c))}=P(c)-\frac{\alpha P(c)+\beta+c}{1-(1-\alpha)}=\frac{c+\beta}{\alpha}\nonumber
\end{equation}
and as buyer is regular, $\phi_\mathcal{B}$ is invertible, so in any BNE $P(c)=\phi_\mathcal{B}^{-1}(\frac{c+\beta}{\alpha})$.
\end{proof}
\subsection{Optimality of affine and non-affine fee-setting mechanisms}
Considering the class of fee-setting mechanisms, one important question is how well these mechanisms can perform comparing to Myerson's optimal mechanism. \citet{LN07, LN13} showed that with a proper choice of function $w(P)$ (not necessarily affine) one can design a fee-setting mechanism that extracts the same revenue in expectation as in Myerson's optimal mechanism. While this result is surprising by itself, they also could show that optimal fee-setting mechanism will be affine when seller's cost is drawn form a reverse-generalized Pareto distribution as in Definition~\ref{reverspareto} (in other words, when the seller's virtual cost is affine). For more details on this result and a simple proof using revenue equivalence theorem~\cite{RB78}, see Appendix~\ref{appsec3}. 

\section{Main results}
\label{mainresult}
As can be seen from the discussion in the last section, \citet{LN07} initiated the study of affine fee-setting 
mechanisms in two-sided markets and identified
necessary and sufficient conditions for the intermediary's
optimal fee schedule to be affine for worst-case buyer distribution. In this section, we continue this 
investigation by addressing the question of when an 
affine fee schedule is optimal or approximately optimal for worst-case seller distribution. By simulation, one can show that there exists a pair of seller and buyer distributions for which the best affine mechanism is not optimal (for example see~\cite{LN07}). However,
in those cases, we may still be able to get constant approximations to maximum intermediary profit with affine fee-settings. We have three main results following this line of thought.

As our first result, intuitively when at least one side of the bilateral market has some linear behaviors it might be possible for the mechanism designer to extract optimal or approximately optimal revenue from the buyer and seller using affine fee-settings. Under this condition, we propose improper fee-setting mechanisms that can extract constant approximations to optimal revenue. More formally:
\begin{quote}
\textbf{Main Result 1}~\emph{ If the buyer has affine virtual value, under some mild assumptions, the affine fee-setting mechanism $w(P)=P-\phi_\mathcal{B}(P)$  extracts a constant approximation of optimal intermediary's revenue in expectation for any seller-regular distributions. Moreover, optimal intermediary's revenue and maximum surplus are in constant approximation of each other in expectation. }\\
\end{quote}

As the second result, when surplus and revenue are in constant approximation of each other (for example when the distributions involved in the trade are not heavy-tailed) posting a proper price for the buyer can always extract constant approximations to optimal surplus, and hence optimal revenue, and seller's cost will not be an important issue. More formally:
\begin{quote}
\textbf{Main Result 2}~\emph {If the random variables $v$ (buyer's value) and $v-c$ (difference of buyer's value and seller's cost) are MHR, the constant fee-setting mechanism $w(P)=\eta_{v-c}$ extracts  constant approximations to optimal intermediary's revenue in expectation for any seller distributions, in which $\eta_{v-c}$ is the monopoly price for the random variable $v-c$ ($\eta_{v-c}=\phi_{v-c}^{-1}(0)$). Moreover, optimal intermediary's revenue and maximum surplus are in constant approximation of each other in expectation.}
\end{quote}

As the final result, we show that a mechanism designer who tries to get constant approximation to optimal revenue for all seller's distribution (especially for heavy-tailed distributions), cannot avoid using the improper fee-setting mechanisms. Formally:
\begin{quote}
\textbf{Main Result 3}~\emph{Even when the buyer's value is drawn from $\textrm{unif}~[0,1]$, there
exists seller cost distributions for which no proper affine
fee-setting mechanism can achieve a constant-approximation
to the optimal intermediary's revenue.}
\end{quote}

In the next Section, we first provide a proof sketch for our first main result, and then for the special case when buyer's value is uniform we propose an improved fee-setting mechanism accompanied by a refined analysis, which  gives us a better approximation ratio. Then in Section~\ref{sec:MHR} we sketch the proof of second main result. Finally in Section~\ref{sec:inapprox} we elaborate on our third inapproximability result.

\subsection{Approximations for affine buyer's virtual value}
Suppose buyer's virtual value is affine, i.e. $\phi_{\mathcal{B}}(v)=\alpha v -\beta$,\footnote{Note that due to Corollary~\ref{cor:gpd} and Lemma~\ref{lemma:gpd}, $v$ is drawn from a generalized Pareto distribution, and hence $\alpha$ should be in $[1,\infty)$.} and now look at the affine fee-setting mechanism $w(P)=P-\phi_\mathcal{B}=(1-\alpha)P+\beta$. We start by proving the following properties of this mechanism, which also show the mechanism is ex-post IR for seller, buyer and trader (and hence no party regrets attending the trade).
\begin{lemma} 
\label{affine:property}If $\phi_{\mathcal{B}}(v)=\alpha v -\beta$ and $P(c)$ is the BNE strategy of seller, then affine fee-setting mechanism $w(P)=P-\phi_\mathcal{B}(P)=(1-\alpha)P+\beta$ has the following 
properties:
\begin{enumerate}[(a)]
\item $\forall c:$ $w(\phi_\mathcal{B}(P))=\alpha w(P)$ and $\phi_\mathcal{B}(\phi_\mathcal{B}(P))=c$.
\item Ex-post utilities of seller and trader are always non-negative.
\item $\forall v: e^{-H_\mathcal{B}(v)}=(\frac{w(v)}{\beta})^{\frac{1}{\alpha-1}}$, when $\alpha\neq 1$.
\end{enumerate}
\end{lemma}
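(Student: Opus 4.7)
\medskip
\noindent\textit{Proof proposal.} The plan is to treat the three claims in order, leveraging two facts established earlier in the paper: the affine form $\phi_\mathcal{B}(v)=\alpha v-\beta$ together with the explicit BNE $P(c)=\phi_\mathcal{B}^{-1}\!\bigl(\tfrac{c+\beta}{\alpha}\bigr)$ from Corollary~\ref{affine:bnechar}, and the characterization of affine virtual values via generalized Pareto distributions from Corollary~\ref{cor:gpd} and Lemma~\ref{lemma:gpd}. Each of the three properties then reduces to a short algebraic verification, so the work is mostly bookkeeping.

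For part~(a), I would substitute $\phi_\mathcal{B}(P)=\alpha P-\beta$ into $w$, obtaining $w(\phi_\mathcal{B}(P))=(1-\alpha)(\alpha P-\beta)+\beta=\alpha((1-\alpha)P+\beta)=\alpha\,w(P)$. The identity $\phi_\mathcal{B}(\phi_\mathcal{B}(P(c)))=c$ then follows by applying $\phi_\mathcal{B}$ to the BNE relation $\phi_\mathcal{B}(P(c))=\tfrac{c+\beta}{\alpha}$, which gives $\alpha\cdot\tfrac{c+\beta}{\alpha}-\beta=c$.

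For part~(b), I would compute each ex-post utility conditional on trade occurring (trade contributes $0$ to every party otherwise). The buyer's utility $v-P(c)$ is non-negative by the definition of the trade event. The seller's utility is $P(c)-w(P(c))-c=\alpha P(c)-\beta-c=\phi_\mathcal{B}(P(c))-c\cdot\tfrac{\alpha-1}{\alpha}\cdot 0+\dots$, which I would simplify cleanly via the BNE to $\tfrac{\beta-(\alpha-1)c}{\alpha}$; similarly the trader's utility is $w(P(c))=\beta-(\alpha-1)P(c)$. The one subtlety — and the main place where care is needed — is showing both are non-negative precisely on the set of $c$ for which trade has positive probability. Here I would invoke Lemma~\ref{lemma:gpd} to conclude that the buyer's value has bounded support $[\,0,\bar v\,]$ with $\bar v=\beta/(\alpha-1)$ (when $\alpha>1$), so the event $v\ge P(c)$ forces both $c\le \beta/(\alpha-1)$ and $P(c)\le\beta/(\alpha-1)$, which makes each expression non-negative. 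The case $\alpha=1$ (generalized exponential) is degenerate and handled separately since $w\equiv\beta$ is already constant.

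For part~(c), Lemma~\ref{lemma:gpd} matches $F$ to a generalized Pareto distribution with $\xi=\alpha-1$ and $\lambda=1/\beta$ (taking $\mu=0$ after a shift), so $1-F(v)=\bigl(1-(\alpha-1)v/\beta\bigr)^{1/(\alpha-1)}=\bigl(w(v)/\beta\bigr)^{1/(\alpha-1)}$, and using $1-F(v)=e^{-H_\mathcal{B}(v)}$ concludes the identity. The only obstacle I anticipate is keeping the parameters aligned across the three different conventions (the $(\mu,\lambda,\xi)$ of Definition~\ref{gpd}, the affine $(\alpha,\beta)$ of the virtual value, and the $(\alpha,\beta)$ of the fee schedule), but once the dictionary is written down the verification is immediate.
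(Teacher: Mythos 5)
Your proposal is correct in substance and reaches the same endpoints, but parts~(b) and~(c) take genuinely different routes from the paper's, and a few of the intermediate steps would benefit from cleanup.

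Part~(a) matches the paper verbatim. For part~(b), the paper's argument is shorter and avoids any appeal to Lemma~\ref{lemma:gpd} or the support endpoints: the trader's utility is $w(P)=P-\phi_\mathcal{B}(P)\ge 0$ directly from the definition of the virtual value (the inverse hazard rate is non-negative), and the seller's utility $P-w(P)-c=\phi_\mathcal{B}(P)-c\ge\phi_\mathcal{B}(\phi_\mathcal{B}(P))-c=0$ follows from monotonicity of $\phi_\mathcal{B}$ together with $P\ge\phi_\mathcal{B}(P)$, and then part~(a). This gives the bound $c\le\beta/(\alpha-1)$ for free without ever naming the support. Your route — computing both utilities explicitly as $\tfrac{\beta-(\alpha-1)c}{\alpha}$ and $\beta-(\alpha-1)P(c)$ and then invoking the generalized Pareto characterization to bound $c$ and $P(c)$ by $\bar v=\beta/(\alpha-1)$ — also works, but it is heavier machinery for the same conclusion and must handle the $c>\bar v$ case by noting trade then has zero probability. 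Also note the line $\alpha P(c)-\beta-c=\phi_\mathcal{B}(P(c))-c\cdot\tfrac{\alpha-1}{\alpha}\cdot 0+\dots$ is garbled as written; only the final expression $\tfrac{\beta-(\alpha-1)c}{\alpha}$ is what you want. For part~(c), the paper computes $h_\mathcal{B}(v)=1/w(v)$ and integrates to get $H_\mathcal{B}(v)=\tfrac{\ln(w(v)/\beta)}{1-\alpha}$, avoiding any reference to the $(\mu,\lambda,\xi)$ parametrization; your route of matching parameters to Definition~\ref{gpd} and reading off $1-F$ is valid but, as you yourself flag, requires keeping three conventions aligned. The paper's direct integration sidesteps that bookkeeping entirely, and extends more gracefully to the $\alpha=1$ case (where the power form degenerates but the cumulative hazard rate integral still makes sense).
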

\begin{proof}
To prove (a) we have $w(\phi_\mathcal{B}(P))=(1-\alpha)(\alpha P-\beta)+\beta=\alpha((1-\alpha)P+\beta)=\alpha w(P)$. Moreover, due to Corollary~\ref{affine:bnechar}, $\phi_\mathcal{B}(P)=\frac{c+\beta}{\alpha}$ and hence $c=\alpha \phi_\mathcal{B}(P) -\beta=\phi_\mathcal{B}(\phi_\mathcal{B}(P))$.
To prove (b), note that utility of trader is equal to $w(P)=P-\phi_\mathcal{B}(P)\geq 0$, due to properties of virtual value. Also, seller's ex-post utility when trade happens is equal to $P-w(P)-c=\phi_\mathcal{B}(P)-c\geq \phi_\mathcal{B}(\phi_\mathcal{B}(P))-c=0$, due to property (a). To prove (c) we have $h_\mathcal{B}(v)=(v-\phi_\mathcal{B}(v))^{-1}=(w(v))^{-1}$. Now, the following calculation finds cumulative hazard rate $H_\mathcal{B}(v)$  which completes the proof of (c).
\begin{equation*}
H_\mathcal{B}(v)=\int_0^{v}h_\mathcal{B}(z)dz=\frac{\ln(w(v))}{1-\alpha}-\frac{\ln(w(0))}{1-\alpha}=\frac{\ln(\frac{w(v)}{\beta})}{1-\alpha}~~~~~~~~~~~~~~~~~~~~~~~~~~~~~~~~~~~~~\qed
\end{equation*}
\end{proof}

Now, using the above properties we prove one can extract a constant portion of optimal revenue and optimal surplus by the above mechanism. The intuition behind the proof is as follows. Look at the special case when the buyer's distribution is uniform on  $[0,1]$.
Then the fee schedule that we propose is $w(P)=1-P$. 
At the first glance this appears counterintuitive: as a seller, if you ask for a higher price then the broker gets less money from you.
But the seller needs to take a trade-off when setting the price:
if the seller picks $P=1$, which minimizes the broker's fee as $w(1)=0$, 
then the chance of finding a buyer with this price will be zero, which 
produces zero utility to the seller. So the seller needs to find 
a balanced price, at which the chance of finding a buyer is large, 
and the fee paid to the intermediary is reasonable as well. 
In other words, the seller is buying ``chance of trade'' from the intermediary by paying $1-P$ to it. We formalize this argument by the following theorem. (Figure~\ref{fig1} presents a geometric proof sketch.)
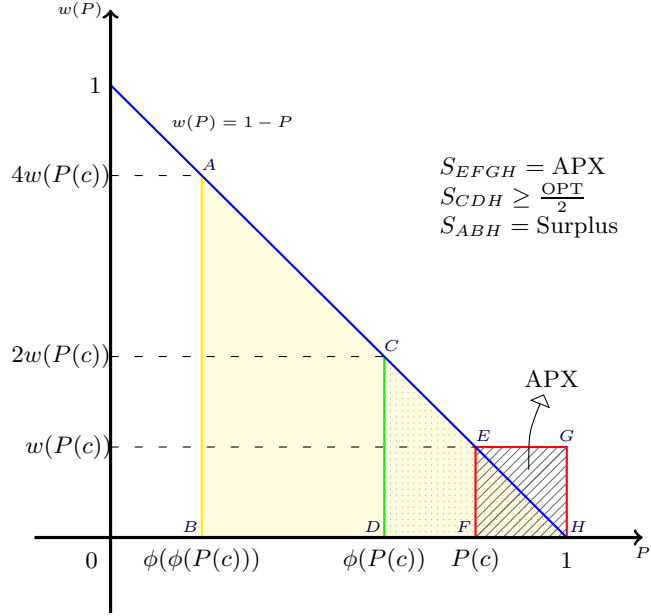
\begin{figure}
\centering
\tikzstyle{letterlabel}=[thick,font=\fontsize{6}{6},color=black!70!blue]
\begin{tikzpicture}
\draw (7,-0.2) node [font=\fontsize{6}{6}] {\emph{$P$}};
\draw (-0.25,-0.3) node {$0$};
\draw (6,-0.3) node {$1$};
\draw (-0.4,7) node[font=\fontsize{6}{6}] {\emph{$w(P)$}};
\draw (-0.2,6) node {$1$};
\draw (1.6,5.5) node[font=\fontsize{6}{6}] {$w(P)=1-P$};
\foreach \x in {1.2}
\foreach \ydel in {-0.65}
\foreach \plabel in {P(c)}
\foreach \xlabel in {0.15}
\foreach \shx in {-0.15}
{
\filldraw[color=yellow!15!white] (6-\x*4,0) -- ++(0,4*\x) -- ++(\x*4,-\x*4) -- ++(-\x*4,0);
\draw (6-\x,-0.3) node {$\plabel$};
\draw[color=red,thick] (6-\x,0)--(6-\x,\x)--(6,\x) -- (6,0);
\draw[loosely dashed] (0,\x)--(6-\x,\x);
\draw (\ydel+0.1,\x) node {$w(\plabel)$};
\draw (6-\x*2,-0.3) node {$\phi(\plabel)$};
\draw[color=green,thick] (6-\x*2,0)--(6-\x*2,\x*2);
\draw[loosely dashed] (0,\x*2)--(6-\x*2,\x*2);
\draw (\ydel,2*\x) node {$2w(\plabel)$};
\draw (6-\x*4,-0.3) node {$\phi(\phi(\plabel))$};
\draw[color=yellow,thick] (6-\x*4,0)--(6-\x*4,\x*4);
\draw[loosely dashed] (0,\x*4)--(6-\x*4,\x*4);
\draw (\ydel,4*\x) node {$4w(\plabel)$};
\draw (6-\x+\shx,\xlabel) node[letterlabel] {$F$};
\draw (6-2*\x+\shx,\xlabel) node[letterlabel] {$D$};
\draw (6-4*\x+\shx,\xlabel) node[letterlabel] {$B$};
\draw (6-\shx,\xlabel) node[letterlabel] {$H$};
\draw (6-4*\x+0.1,4*\x+\xlabel) node[letterlabel] {$A$};
\draw (6-2*\x+0.1,2*\x+\xlabel) node[letterlabel] {$C$};
\draw (6-\x+0.1,\x+\xlabel) node[letterlabel] {$E$};
\draw (6,\x+\xlabel) node[letterlabel] {$G$};
\foreach \totstep in {10}
\foreach \step in {0,...,\totstep}
{
\draw[very thin,color=black!60!white] (6-\x, \step/\totstep*\x) -- (6-\step/\totstep*\x, \x);
\draw[very thin,color=black!60!white] (6, \step/\totstep*\x) -- (6-\step/\totstep*\x, 0);
}
\foreach \totstep in {30}
\foreach \step in {0,...,\totstep}
{
\draw[very thin,dotted, color=black!60!white] (6-2*\x, \step/\totstep*2*\x) -- (6-\step/\totstep*2*\x, \step/\totstep*2*\x);
}
}
\draw[-open triangle 90, thin] (5.5, 0.9).. controls +(0,0.5).. +(0.2,1);
\draw (5.5,4.5) node[font=\small] {$\begin{array}{l}S_{EFGH}=\mathrm{APX}\\S_{CDH}\geq \frac{\mathrm{OPT}}{2}\\S_{ABH}=\mathrm{Surplus}\end{array}$};
\draw (5.8,2.1) node[font=\small] {APX};
\draw[color=blue,thick] (6,0) -- (0,6);
\draw[->,very thick] (-1,0) -- (7,0);
\draw[->,very thick] (0,-1) -- (0,7);
\end{tikzpicture}
\caption{In this figure, buyer value is unif$~[0,1]$ and $w(P)=1-P$. This fee-setting mechanism (APX) extracts $\frac{1}{4}$ fraction of optimal revenue (OPT) and $\frac{1}{8}$ fraction of optimal surplus (Surplus) in expectation, which can be seen by comparing the area of corresponding regions. }
\label{fig1}
\end{figure}
\begin{theorem}
\label{theorem:main1}
Suppose buyer's virtual value is affine: $\phi_\mathcal{B}(v)=\alpha v- \beta$ for some $\alpha\geq 1$. Then the revenue of affine fee-setting mechanism $w(P)=P-\phi_\mathcal{B}(P)$ is $\alpha^{\frac{1}{\alpha-1}}-$approximation to optimal revenue and $\alpha^{\frac{\alpha+1}{\alpha-1}}-$approximation to optimal surplus in expectation. 
\end{theorem}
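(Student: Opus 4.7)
\proofsketch Let $\gamma := \alpha/(\alpha-1)$. The plan is to express each of $\textrm{Rev-APX}$, $\textrm{OPT-Rev}$, and $\textrm{OPT-Surplus}$ as a closed-form expectation of $w(\cdot)^\gamma$ evaluated at different arguments, and then read off both approximation ratios by direct comparison of those arguments. The key observation is that since $w(v) = (1-\alpha)v + \beta$ is affine, the antiderivative of $w(v)^{1/(\alpha-1)}$ is $-w(v)^\gamma/\alpha$; combined with property~(c) of Lemma~\ref{affine:property} (rewriting $1-F(v)$ as $(w(v)/\beta)^{1/(\alpha-1)}$) and the boundary fact $w(\overline v) = 0$ (forced by evaluating property~(c) at $v=\overline v$, where $F(\overline v)=1$), every integral of the form $\int_p^{\overline v}(1-F(v))\,dv$ collapses to $\beta^{-1/(\alpha-1)}\,w(p)^\gamma/\alpha$.

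Applying this, Corollary~\ref{affine:bnechar} gives $P(c) = \phi_\mathcal{B}^{-1}((c+\beta)/\alpha)$ in the unique BNE, so by property~(c) directly, $\textrm{Rev-APX} = E_c[w(P(c))(1-F(P(c)))] = \beta^{-1/(\alpha-1)}\,E_c[w(P(c))^\gamma]$. Conditioning on $c$ in the surplus formula and using the antiderivative identity gives $\textrm{OPT-Surplus} = \beta^{-1/(\alpha-1)}\,E_c[w(c)^\gamma]/\alpha$, and rewriting Myerson's formula as $\textrm{OPT-Rev} = \alpha\,E_{v,c}[(v-p^*(c))_+]$ with $p^*(c) := \phi_\mathcal{B}^{-1}(\phi_\mathcal{S}(c))$ (using affine $\phi_\mathcal{B}$ to pull out the factor $\alpha$) and applying the same trick yields $\textrm{OPT-Rev} = \beta^{-1/(\alpha-1)}\,E_c[w(p^*(c))_+^\gamma]$. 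The algebraic bridge between the three expressions comes from iterating property~(a) of Lemma~\ref{affine:property}: since $\phi_\mathcal{B}(P(c)) = \phi_\mathcal{B}^{-1}(c)$ by the BNE equation, one application gives $w(\phi_\mathcal{B}^{-1}(c)) = \alpha\,w(P(c))$, and a second application gives $w(c) = \alpha^2\,w(P(c))$. Substituting into the surplus formula delivers the pointwise-in-$c$ ratio $\alpha^{2\gamma-1} = \alpha^{(\alpha+1)/(\alpha-1)}$ between surplus and APX contributions, proving the surplus bound immediately.

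The revenue bound is the main obstacle. Seller regularity gives $\phi_\mathcal{S}(c) \geq c$, hence $p^*(c) \geq \phi_\mathcal{B}^{-1}(c)$ by monotonicity of $\phi_\mathcal{B}$, and since $w$ is decreasing (as $\alpha \geq 1$) this yields the pointwise bound $w(p^*(c))_+ \leq w(\phi_\mathcal{B}^{-1}(c)) = \alpha\,w(P(c))$. Raising to the $\gamma$ and taking expectations gives the naive bound $\textrm{OPT-Rev} \leq \alpha^\gamma\,\textrm{Rev-APX}$, which is a factor of $\alpha$ looser than the stated constant $\alpha^{1/(\alpha-1)}$. Recovering this missing factor appears to require an integrated (rather than pointwise) argument: the natural tool is to plug $q(c) := 1-F(p^*(c))$ into the Myerson virtual-cost identity $E_c[\phi_\mathcal{S}(c)q(c)] = E_c[c\,q(c)] + E_c\bigl[\int_c^{\overline c} q(t)\,dt\bigr]$ so that the pointwise gap $\phi_\mathcal{S}(c)-c = G(c)/g(c)$ is traded against the integrated survival tail, recovering the extra factor of $\alpha$ in aggregate rather than pointwise and yielding the sharp ratio $\alpha^{1/(\alpha-1)}$.
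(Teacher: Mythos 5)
Your approach is essentially the paper's: both proofs reduce everything to the identity $\int_p^{\overline v}(1-F(t))\,dt=\beta^{-1/(\alpha-1)}\,w(p)^{\alpha/(\alpha-1)}/\alpha$ obtained from property (c) of Lemma~\ref{affine:property}, express the three conditional quantities as this expression evaluated at $P(c)$, $\phi_\mathcal{B}(P(c))$ and $\phi_\mathcal{B}(\phi_\mathcal{B}(P(c)))$ respectively, and close the loop with $w(\phi_\mathcal{B}(x))=\alpha w(x)$ and the BNE characterization of Corollary~\ref{affine:bnechar}. Your surplus bound $\alpha^{(\alpha+1)/(\alpha-1)}$ is complete and correct. (One cosmetic remark: $\phi_\mathcal{S}(c)=c+G(c)/g(c)\ge c$ needs only nonnegativity of $G$ and $g$, not regularity; regularity is what makes the BNE and Myerson's characterization well-defined.)

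The problem is your third paragraph, and the problem is not yours. The bound you call ``naive,'' $\textrm{OPT-Rev}\le\alpha^{\alpha/(\alpha-1)}\,\textrm{Rev-APX}$, is the correct and tight constant; the theorem's stated $\alpha^{1/(\alpha-1)}$ is an erratum. In Appendix~\ref{detail:proof} the paper evaluates $\alpha\int_{t\ge\phi_\mathcal{B}(P)}(w(t)/\beta)^{1/(\alpha-1)}dt$ as $\frac{1}{\alpha\beta^{1/(\alpha-1)}}w(\phi_\mathcal{B}(P))^{\alpha/(\alpha-1)}$, which carries a spurious extra factor of $1/\alpha$ relative to the antiderivative you correctly computed; removing it yields exactly your $\alpha^{\alpha/(\alpha-1)}$. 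You can falsify the stated constant directly: take $F=\textrm{unif}[0,1]$ and $G=\textrm{unif}[0,\epsilon]$ with $\epsilon\to0$, so $\phi_\mathcal{S}(c)=2c\to c$; then $\textrm{OPT-Rev}\to 1/4$ while $\textrm{Rev-APX}\to 1/16$, a ratio of $4=\alpha^{\alpha/(\alpha-1)}$, exceeding the claimed $\alpha^{1/(\alpha-1)}=2$. The paper's own Figure~\ref{fig1} ($S_{CDH}\ge\mathrm{OPT}/2$, i.e.\ a $4$-approximation to revenue for the uniform case) and the ``$4$-approximation'' claim of Corollary~\ref{coro:powerdistribution} corroborate your exponent, not the theorem's. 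Consequently the integrated Myerson-identity argument you sketch to recover the ``missing'' factor of $\alpha$ cannot be completed --- the target inequality is simply false. Delete that paragraph, state the revenue constant as $\alpha^{\alpha/(\alpha-1)}$, and your proof is done; since $\alpha^{\alpha/(\alpha-1)}\le 4$ for $\alpha\in(1,2]$ and tends to $e$ as $\alpha\to1$, all downstream corollaries survive unchanged.
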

\begin{proof}
Suppose $P(c)$ is seller's BNE strategy. We first find an equivalent expression for the expectation of revenue of the fee-setting mechanism $w(P)=P-\phi_\mathcal{B}(P)=(1-\alpha)P+\beta$  (i.e. $\textrm{APX}(\alpha,\beta)$) conditioned on a fixed cost $c$ for the seller. For a fixed $c$, trade happens if buyer's value is at least equal to the posted price, i.e. if $v\geq P(c)$. In this case, trader gets its share $w(P)$ and returns the rest to the seller. So,
\begin{equation}
\mathbb{E}\{\textrm{Rev-APX}|c\}=\mathbb{E}\{w(P)\mathds{1}\{v\geq P\}|c\}=w(P)(1-F(P))=w(P)e^{-H_\mathcal{B}(P)}\overset{(i)}{=}\frac{w(P)^{\frac{\alpha}{\alpha-1}}}{\beta^{\frac{1}{\alpha-1}}},\nonumber
\end{equation} 
where equality (i) is due to property (c) in Lemma~\ref{affine:property}. Basically, the conditional revenue of $\textrm{APX}$ is the measured area under a rectangle with width equal to $w(P)$ and length equal to the interval $\{v: v\geq P\}$, when we use the distribution of buyer's value as the measure function. For the special case of uniform distribution, this corresponds to normal area of this rectangle, as can be seen in Figure~\ref{fig1}. For the optimal revenue, we try to obtain a similar upper-bound. From Myerson's theory of optimal mechanisms, we know $\mathbb{E}\{\textrm{OPT-Rev}|c\}=\mathbb{E}\{(\phi_\mathcal{B}(v)-\phi_{\mathcal{S}}(c))\mathds{1}\{\phi_\mathcal{B}(v)\geq \phi_{\mathcal{S}}(c)\}|c\}$. By plugging in buyer's affine virtual function $\phi_\mathcal{B}(v)=\alpha v-\beta$ in this expression, we have:
\begin{align}
&\mathbb{E}\{\textrm{OPT-Rev}|c\}=\mathbb{E}\left\{(\alpha v-\beta-\phi_{\mathcal{S}}(c))\mathds{1}\{\alpha v-\beta\geq \phi_{\mathcal{S}}(c)\}|c\right\}\nonumber\\
&=\alpha \mathbb{E}\left\{(v-\frac{\phi_{\mathcal{S}}(c)+\beta}{\alpha})\mathds{1}\{v\geq \frac{\phi_{\mathcal{S}}(c)+\beta}{\alpha}\}|c\right\}\overset{(i)}{=}\alpha\mathbb{E}\left\{\frac{1}{h_\mathcal{B}(v)}\mathds{1}\{v\geq \frac{\phi_{\mathcal{S}}(c)+\beta}{\alpha}\}|c\right\}\nonumber\\
&\overset{(ii)}{\leq} \alpha\mathbb{E}\left\{\frac{1}{h_\mathcal{B}(v)}\mathds{1}\{v\geq \phi_\mathcal{B}(P)\}|c\right\}\overset{(iii)}{=}\alpha\mathbb{E}\left\{w(v)\mathds{1}\{v\geq \phi_\mathcal{B}(P)\}|c\right\}\nonumber,
\end{align}
where equality (i) is true because for any $x$, $\mathbb{E}\{(v-x)\mathds{1}\{v\geq x\}\}=\int_{t\geq x}(1-F(t)) dt=\mathbb{E}\{\frac{1}{h(v)}\mathds{1}\{v\geq x\}\}$, (ii) is true because $\frac{\phi_{\mathcal{S}}(c)+\beta}{\alpha}\geq \frac{c+\beta}{\alpha}=P(c)$, and (iii) is true because $w(v)=v-\phi_\mathcal{B}(v)=\frac{1}{h_\mathcal{B}(v)}$. The last upper-bound on the conditional revenue of optimal Myerson divided by $\alpha$ is the measured area under the curve $w(v)$ in the interval $\{v: v\geq \phi_\mathcal{B}(P)\}$, when we use the distribution of buyer's value as the measure function. For the special case of uniform distribution, this corresponds to normal area under the curve, as can be seen in Figure~\ref{fig1}. One can calculate this term by taking the integral of an affine function and show this is equal to $\alpha^{\frac{1}{\alpha-1}}\mathbb{E}\{\textrm{Rev-APX}|c\}$, which gives us the desired approximation factor (see Appendix~\ref{detail:proof}). Here we take a different approach which results in a slightly weaker approximation factor, but is more intuitive. $w(v)$ is non-increasing (as $\alpha\geq 1$) and hence $w(v)\leq w(\phi_\mathcal{B}(P))$ in the region $\{v:v\geq \phi_\mathcal{B}(p)\}$. So, we can upper-bound the conditional expectation of optimal revenue further by
\begin{align*}
&\mathbb{E}\{\textrm{OPT-Rev}|c\}\leq \alpha w(\phi_\mathcal{B}(P))(1-F(\phi_\mathcal{B}(P)))=\alpha w(\phi_\mathcal{B}(P))e^{-H_\mathcal{B}(w(\phi_\mathcal{B}(P)))}\nonumber\\
&\overset{(i)}{=}\alpha\frac{w(\phi_\mathcal{B}(P))^{\frac{\alpha}{\alpha-1}}}{\beta^{\frac{1}{\alpha-1}}}\overset{(ii)}{=}\alpha^{\frac{\alpha}{\alpha-1}}\mathbb{E}\{\textrm{Rev-APX}|c\},
\end{align*} 
in which (i) is true due to property (c) in Lemma~\ref{affine:property}, and (ii) is true because $w(\phi_\mathcal{B}(P))=\alpha w(P)$ based on property (a) in Lemma~\ref{affine:property}. Taking expectation with respect to $c$ will prove the desired approximation factor with respect to optimal revenue. 

To compare the revenue of our fee-setting mechanism with the surplus, we use the same machinery to find an expression for the expectation of maximum surplus for a fixed $c$. Similar to the calculations for optimal revenue we have $\mathbb{E}\{\textrm{OPT-Surplus}|c\}=\mathbb{E}\{(v-c)\mathds{1}\{v\geq c\}|c\}=\mathbb{E}\{\frac{1}{h_\mathcal{B}(v)}\mathds{1}\{v\geq c\}|c\}=\mathbb{E}\{w(v)\mathds{1}\{v\geq c\}|c\}=\mathbb{E}\{w(v)\mathds{1}\{v\geq \phi_\mathcal{B}(\phi_\mathcal{B}(P))\}|c\}$, where the last equality is true because $\phi_\mathcal{B}(\phi_\mathcal{B}(P))=c$ due to property (a) in Lemma~\ref{affine:property}. Again, the conditional maximum surplus is the measured area under the curve $w(v)$ in the interval $\{v: v\geq \phi_\mathcal{B}(\phi_\mathcal{B}(P))\}$, when we use the distribution of buyer's value as the measure function. For the special case of uniform distribution, again this corresponds to normal area under the curve, as can be seen in Figure~\ref{fig1}. Now, again one can either calculate this term by taking integral of an affine function and show this is equal to $\alpha^{\frac{\alpha+1}{\alpha-1}}\mathbb{E}\{\textrm{Rev-APX}|c\}$ (which gives us the desired approximation factor, see Appendix~\ref{detail:proof} for the proof), or can use the following upper-bound for a slightly weaker factor (but more intuitive).
\begin{align*}
&\mathbb{E}\{\textrm{OPT-Surplus}|c\}\overset{(i)}{\leq}w(\phi_\mathcal{B}(\phi_\mathcal{B}(P)))(1-F(\phi_\mathcal{B}(\phi_\mathcal{B}(P))))= w(\phi_\mathcal{B}(\phi_\mathcal{B}(P)))e^{-H_\mathcal{B}(w(\phi_\mathcal{B}(\phi_\mathcal{B}(P))))}\nonumber\\
&=\frac{w(\phi_\mathcal{B}(\phi_\mathcal{B}(P)))^{\frac{\alpha}{\alpha-1}}}{\beta^{\frac{1}{\alpha-1}}}\overset{(ii)}{=}\alpha^{\frac{2\alpha}{\alpha-1}}\mathbb{E}\{\textrm{Rev-APX}|c\},
\end{align*} 
where (i) is true because $w(v)$ is non-increasing, and (ii) is true because by using property (a) of Lemma~\ref{affine:property} twice we have $w(\phi_\mathcal{B}(\phi_\mathcal{B}(P)))=\alpha w(\phi_\mathcal{B}(P))=\alpha^2w(P)$. Taking expectation with respect to $c$ will complete the proof. \qed
\end{proof}

We are now ready to obtain approximation ratios for different cases of generalized Pareto distributions, namely general power distributions and exponential distributions. This is exactly the same class of distributions that \citet{LN07,LN13} investigated.

\begin{corollary}[Exponential distribution]
Suppose $F(v)=1-e^{-\lambda v}$ over $[0,\infty)$ for $\lambda>0$. Then revenue of $\textrm{APX}(1,\frac{1}{\lambda})$ (i.e. fee-setting with $w(P)=\frac{1}{\lambda}$) is $e^2$-approximation to maximum surplus, and $e$-approximation to the optimal revenue in expectation.
\end{corollary}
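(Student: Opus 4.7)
The plan is to recognize this corollary as the boundary case $\alpha=1$ of Theorem~\ref{theorem:main1}. First I would verify via Corollary~\ref{cor:gpd} that $F(v)=1-e^{-\lambda v}$ is the generalized Pareto distribution with parameters $\mu=0,\xi=0$, so its virtual value is $\phi_\mathcal{B}(v)=v-\tfrac{1}{\lambda}$, i.e.\ affine with $\alpha=1,\beta=\tfrac{1}{\lambda}$. In particular the fee schedule prescribed by the main theorem, $w(P)=P-\phi_\mathcal{B}(P)$, collapses to the constant $w(P)=\tfrac{1}{\lambda}$, matching the corollary's $\textrm{APX}(1,\tfrac{1}{\lambda})$. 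By Corollary~\ref{affine:bnechar} the seller's unique BNE price is $P(c)=\phi_\mathcal{B}^{-1}(c+\tfrac{1}{\lambda})=c+\tfrac{2}{\lambda}$.

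The main (and only) obstacle is that the factors $\alpha^{1/(\alpha-1)}$ and $\alpha^{(\alpha+1)/(\alpha-1)}$ promised by Theorem~\ref{theorem:main1}, as well as property~(c) of Lemma~\ref{affine:property}, are formally undefined at $\alpha=1$. One could handle this by continuity, using the standard limits $\lim_{\alpha\to 1}\alpha^{1/(\alpha-1)}=e$ and $\lim_{\alpha\to 1}\alpha^{(\alpha+1)/(\alpha-1)}=e^2$, but the cleaner route is to rerun the short chain of inequalities from the proof of Theorem~\ref{theorem:main1} directly, substituting the exponential's own identities $1-F(v)=e^{-\lambda v}$ and $h_\mathcal{B}(v)=\lambda$ in place of the missing property~(c).

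Concretely, the conditional revenue is $\mathbb{E}\{\textrm{Rev-APX}|c\}=w(P(c))(1-F(P(c)))=\tfrac{1}{\lambda}e^{-\lambda(c+2/\lambda)}=\tfrac{e^{-2}}{\lambda}e^{-\lambda c}$. Following the proof of Theorem~\ref{theorem:main1} line by line, the Myerson bound on $\mathbb{E}\{\textrm{OPT-Rev}|c\}$ reduces to $\mathbb{E}\{\tfrac{1}{h_\mathcal{B}(v)}\mathds{1}\{v\geq \phi_\mathcal{B}(P(c))\}|c\}=\tfrac{1}{\lambda}e^{-\lambda(c+1/\lambda)}=e\cdot \mathbb{E}\{\textrm{Rev-APX}|c\}$, and similarly (using $\phi_\mathcal{B}(\phi_\mathcal{B}(P(c)))=c$ from property~(a) of Lemma~\ref{affine:property}, which is valid for $\alpha=1$) $\mathbb{E}\{\textrm{OPT-Surplus}|c\}\leq \tfrac{1}{\lambda}e^{-\lambda c}=e^2\cdot \mathbb{E}\{\textrm{Rev-APX}|c\}$. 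Taking expectation over an arbitrary seller distribution for $c$ preserves both ratios pointwise and yields the claimed $e$-approximation to $\textrm{OPT-Rev}$ and $e^2$-approximation to $\textrm{OPT-Surplus}$.
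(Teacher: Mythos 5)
Your proof is correct, and it diverges from the paper's in a way worth noting. The paper's own proof of this corollary simply declares it a "special case of Theorem~\ref{theorem:main1}" with $\alpha=1$, $\beta=\tfrac{1}{\lambda}$, and reads off the approximation factors as $\lim_{\alpha\to 1}\alpha^{1/(\alpha-1)}=e$ and $\lim_{\alpha\to 1}\alpha^{(\alpha+1)/(\alpha-1)}=e^2$. As you observe, this is slightly delicate: the proof of Theorem~\ref{theorem:main1} leans on property~(c) of Lemma~\ref{affine:property}, which is only stated for $\alpha\neq 1$, and the expression $\alpha^{1/(\alpha-1)}$ is an indeterminate $1^\infty$ form at $\alpha=1$. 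The paper's limit-taking is an informal shorthand for an argument that the bound varies continuously across the boundary, which it does not actually justify. Your alternative — bypassing the degenerate formula and rerunning the inequality chain of Theorem~\ref{theorem:main1} directly with the exponential-specific identities $1-F(v)=e^{-\lambda v}$, $h_\mathcal{B}(v)=\lambda$, $w\equiv\tfrac{1}{\lambda}$ (which also makes the monotonicity step $w(v)\leq w(\phi_\mathcal{B}(P))$ an equality), and $P(c)=c+\tfrac{2}{\lambda}$ — gives the same numbers but is self-contained and rigorous at $\alpha=1$. You also correctly note that property~(a) of Lemma~\ref{affine:property} does remain valid at $\alpha=1$, so the surplus bound via $\phi_\mathcal{B}(\phi_\mathcal{B}(P(c)))=c$ goes through unchanged. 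In short: same strategy, but you patch a small rigor gap that the paper elides with a limit.
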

\begin{proof}
This is the special case of Theorem~\ref{theorem:main1} when $\phi_\mathcal{B}(v)=v-\frac{1-F(v)}{f(v)}
=v-\frac{1}{\lambda}$. That gives us $\alpha=1, \beta=\frac{1}{\lambda}$. Following the fact that $\lim_{\alpha\rightarrow 1} \alpha^{\frac{1}{\alpha-1}}=e$ and $\lim_{\alpha\rightarrow 1} \alpha^{\frac{\alpha+1}{\alpha-1}}=e^2$, we prove the desired approximation factors. \qed
\end{proof}
\begin{corollary}[Power distributions] 
\label{coro:powerdistribution} 
Suppose $F(v)=1-(1-\frac{v}{\bar{v}})^a$ over the support $[0,\bar{v}]$ for some $a\geq 1$. Then the revenue of $\textrm{APX}(\frac{a+1}{a},\frac{\bar{v}}{a})$ (i.e. fee-setting with $w(P)=\frac{-1}{a}P+\frac{\bar{v}}{a}$) is $8-$approximation to the maximum surplus, 
and $4-$approximation to the maximum revenue.
\end{corollary}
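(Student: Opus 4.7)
The plan is to invoke Theorem~\ref{theorem:main1} directly: verify that the given power distribution produces an affine virtual value of the claimed form, then specialize the generic approximation ratios $\alpha^{1/(\alpha-1)}$ and $\alpha^{(\alpha+1)/(\alpha-1)}$ to this parameter setting.

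First, I would compute $\phi_\mathcal{B}$ explicitly for $F(v)=1-(1-v/\bar{v})^a$. A direct calculation gives $f(v)=\frac{a}{\bar{v}}(1-v/\bar{v})^{a-1}$ and $\frac{1-F(v)}{f(v)}=\frac{\bar{v}-v}{a}$, so
\begin{equation*}
\phi_\mathcal{B}(v) \;=\; v-\frac{\bar{v}-v}{a} \;=\; \frac{a+1}{a}\,v - \frac{\bar{v}}{a},
\end{equation*}
which is affine with $\alpha=\frac{a+1}{a}$ and $\beta=\frac{\bar{v}}{a}$. Plugging these into $w(P)=(1-\alpha)P+\beta$ recovers exactly $w(P)=-\frac{1}{a}P+\frac{\bar{v}}{a}$, matching the fee schedule stated in the corollary.

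Second, I would feed $\alpha=\frac{a+1}{a}$ into Theorem~\ref{theorem:main1}. Since $\frac{1}{\alpha-1}=a$ and $\frac{\alpha+1}{\alpha-1}=2a+1$, the revenue approximation factor becomes $\left(1+\tfrac{1}{a}\right)^{a}$ and the surplus approximation factor becomes $\left(1+\tfrac{1}{a}\right)^{2a+1}$. The revenue bound is immediate: for every $a\geq 1$, $(1+1/a)^a$ is monotonically increasing in $a$ with limit $e$, so it is bounded by $e<4$.

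The one real technical step is bounding $(1+1/a)^{2a+1}\leq 8$ for all $a\geq 1$. This is where I expect the main obstacle: unlike $(1+1/a)^a$, the exponent $2a+1$ grows too, and the quantity is not monotone in the obvious way. My plan is to show that $g(a)\triangleq(2a+1)\ln(1+1/a)$ is non-increasing on $[1,\infty)$; since $g(1)=3\ln 2=\ln 8$ and $g(a)\to 2=\ln e^2$ as $a\to\infty$, monotonicity would deliver $g(a)\leq\ln 8$, i.e.\ the desired bound. Differentiating gives $g'(a)=2\ln(1+1/a)-\frac{2a+1}{a(a+1)}$, and one checks $g'(a)\leq 0$ by expanding $\ln(1+1/a)=\sum_{k\geq 1}\frac{(-1)^{k+1}}{k a^k}$ and comparing term by term with the rational $\frac{2a+1}{a(a+1)}=\frac{1}{a}+\frac{1}{a+1}$. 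Once this monotonicity is established, applying Theorem~\ref{theorem:main1} yields the $4$-approximation to optimal revenue and the $8$-approximation to maximum surplus, and the corollary follows.
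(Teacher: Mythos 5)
Your proposal is correct and follows essentially the same route as the paper: identify $\phi_\mathcal{B}(v)=\frac{a+1}{a}v-\frac{\bar v}{a}$, set $\alpha=\frac{a+1}{a}\in(1,2]$, $\beta=\frac{\bar v}{a}$, and specialize the factors $\alpha^{1/(\alpha-1)}$ and $\alpha^{(\alpha+1)/(\alpha-1)}$ from Theorem~\ref{theorem:main1}. The only difference is that the paper simply asserts $\alpha^{1/(\alpha-1)}\leq 4$ and $\alpha^{(\alpha+1)/(\alpha-1)}\leq 8$ for $\alpha\in[1,2]$, whereas you supply the (correct) monotonicity argument for $(2a+1)\ln(1+1/a)$ — a welcome addition, and note that the derivative sign also follows in one line from the trapezoid bound $\ln\frac{a+1}{a}\leq\frac{1}{2}\left(\frac{1}{a}+\frac{1}{a+1}\right)$ for the convex function $1/t$.
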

\begin{proof}
This is the special case of Theorem~\ref{theorem:main1} when $\phi_\mathcal{B}(v)=v-\frac{1-F(v)}{f(v)}=\frac{a+1}{a}v-\frac{\bar{v}}{a}$. 
So $\alpha=\frac{a+1}{a}$, $\beta=\frac{\bar v}{a}$. Note that as $a\geq 1$, we have $\alpha\leq 2$. Following the fact that for $\alpha\leq 2$, we have $\alpha^{\frac{1}{\alpha-1}}\leq 4$  and $\alpha^{\frac{\alpha+1}{\alpha-1}}\leq 8$, we prove the desired approximation factors. \qed
\end{proof}
\subsection{Approximations for the uniform distribution}
Uniform distribution on $[0,1]$ is a special case 
of power distributions, so based on the results of the last section we can get approximation factors $4$ and $8$ with respect to optimal revenue and surplus respectively. However, we propose a different fee-setting mechanism that is $3-$approximation with respect to optimal revenue in expectation. Our technique is based on the ``best of two" technique for designing approximation algorithms, which picks the best of two mechanisms each performs well on some class of input seller's distribution. For the proof, see Appendix~\ref{proof:unif}.

\begin{theorem} 
\label{theorem:unif}Suppose $F=\textrm{unif}~[0,1]$. Let $y\triangleq\min \{\phi_\mathcal{S}^{-1}(1),\overline{c}\}$\footnote{ We set $\phi_\mathcal{S}^{-1}(1)=+\infty$ when $\phi_\mathcal{S}(1)=1$ doesn't have a solution.}. Then the mechanism which is best of $\textrm{APX}(2,1)$ and $\textrm{APX}(1,\frac{1-y}{2})$ in terms of revenue is $3$-approximation to optimal revenue in expectation. 
\end{theorem}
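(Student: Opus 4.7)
Following Corollary~\ref{affine:bnechar} and the uniform buyer's $\phi_\mathcal{B}(v)=2v-1$, the seller's BNE under $\textrm{APX}(2,1)$ is $P(c)=(c+3)/4$, giving conditional revenue $w(P)(1-P)=(1-c)^2/16$. Under $\textrm{APX}(1,(1-y)/2)$ the fee is the constant $\beta=(1-y)/2$; the seller's BNE is $P(c)=(1+c+\beta)/2$ whenever $c+\beta\leq 1$ (equivalently $c\leq (1+y)/2$), with no trade otherwise, so the conditional revenue is $\beta(1-P)=(1-y)(1+y-2c)/8$. A direct Myerson computation, using the trade region $v\geq (1+\phi_\mathcal{S}(c))/2$, yields the optimal conditional revenue $(1-\phi_\mathcal{S}(c))^2/4$ on $c\in[0,y]$ and zero beyond.

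The plan is to prove the single inequality
\[
\mathbb{E}\{\textrm{OPT-Rev}\}\ \leq\ 2\,\mathbb{E}\{\textrm{Rev-APX}(2,1)\}+\mathbb{E}\{\textrm{Rev-APX}(1,(1-y)/2)\},
\]
which immediately gives the claimed $3$-approximation because the right-hand side is at most $3\max(\mathbb{E}\{\textrm{Rev-APX}(2,1)\},\mathbb{E}\{\textrm{Rev-APX}(1,(1-y)/2)\})$. The first step is to exploit $\phi_\mathcal{S}(c)\geq c$ (and $\phi_\mathcal{S}(c)\leq 1$ on $[0,y]$) to replace the quadratic $(1-\phi_\mathcal{S}(c))^2$ by the bilinear upper bound $(1-\phi_\mathcal{S}(c))(1-c)$. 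Expanding this product and applying the identity $\phi_\mathcal{S}(c)g(c)=\frac{d}{dc}[cG(c)]$ together with integration by parts (handling the pieces $\int g,\ \int cg,\ \int \phi_\mathcal{S} g,\ \int c\phi_\mathcal{S} g$ separately) should collapse everything into
\[
4\,\mathbb{E}\{\textrm{OPT-Rev}\}\ \leq\ G(y)(1-y)^2+\int_0^y(1-c)\,G(c)\,dc.
\]

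I would then apply Fubini to the last term: $\int_0^y (1-c)G(c)\,dc=\int_0^y g(t)\int_t^y(1-c)\,dc\,dt=\tfrac{1}{2}\int_0^y(1-t)^2 g(t)\,dt-\tfrac{(1-y)^2 G(y)}{2}$, which, after substitution, gives $\mathbb{E}\{\textrm{OPT-Rev}\}\leq \frac{(1-y)^2 G(y)}{8}+\frac{1}{8}\int_0^y(1-c)^2 g(c)\,dc$. The last step is to match each piece to a mechanism: on $[0,y]$ we have $1+y-2c\geq 1-y$, so $\mathbb{E}\{\textrm{Rev-APX}(1,(1-y)/2)\}\geq \frac{(1-y)^2 G(y)}{8}$; and since the $\textrm{APX}(2,1)$ expectation extends over all of $[0,\bar c]\supseteq [0,y]$, we have $\mathbb{E}\{\textrm{Rev-APX}(2,1)\}\geq \frac{1}{16}\int_0^y(1-c)^2 g(c)\,dc$, so the second piece is at most $2\,\mathbb{E}\{\textrm{Rev-APX}(2,1)\}$. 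Chaining these inequalities delivers the single displayed bound above.

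The main obstacle I expect is the integration-by-parts step: the bilinear form $(1-\phi_\mathcal{S}(c))(1-c)g(c)$ breaks into four sub-integrals, each requiring its own integration by parts (and careful bookkeeping of the boundary at $c=y$, where $\phi_\mathcal{S}(y)=1$) before cancellations yield the clean form $G(y)(1-y)^2+\int_0^y(1-c)G(c)\,dc$. Once that bookkeeping is in place, the Fubini manipulation and the matching to the two mechanisms become routine, making the overall argument a clean instance of the ``best of two'' paradigm advertised before the theorem statement.
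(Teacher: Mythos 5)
Your proof is correct and follows essentially the same structure as the paper's: both bound $\mathbb{E}\{\textrm{OPT-Rev}\}$ above by $\frac{(1-y)^2G(y)}{8}+\frac{1}{8}\int_0^y(1-c)^2g(c)\,dc$ and then match the first summand to $\textrm{APX}(1,\tfrac{1-y}{2})$ and the second (with a factor of $2$) to $\textrm{APX}(2,1)$, yielding the $3$-approximation. The route to the intermediate bound is slightly different but equivalent: you linearize early via $(1-\phi_\mathcal{S}(c))^2\le(1-\phi_\mathcal{S}(c))(1-c)$ (using $\phi_\mathcal{S}(c)\ge c$ and $\phi_\mathcal{S}(c)\le 1$ on $[0,y]$), whereas the paper expands the square fully and applies the equivalent bound $G/g\le 1-c$ only on the surviving cross-term; in fact, since $(1-\phi_\mathcal{S}(c))g(c)=\frac{d}{dc}\big[(1-c)G(c)\big]$, your bilinear form collapses to a single integration by parts, which is marginally cleaner than the four sub-integrals you anticipated or the paper's expansion. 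Likewise, your pointwise bound $1+y-2c\ge 1-y$ on $[0,y]$ for matching to $\textrm{APX}(1,\tfrac{1-y}{2})$ is a direct substitute for the paper's IBP-plus-monotonicity argument and gives the same constant.
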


\begin{corollary}
The best affine fee-setting mechanism is at least a $3$-approximation to optimal revenue expectation when   $F=\textrm{unif}~[0,1]$.
\end{corollary}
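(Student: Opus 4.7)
The plan is to derive this as an essentially immediate consequence of Theorem~\ref{theorem:unif}. First I would observe that both building blocks of the mechanism described there, namely $\textrm{APX}(2,1)$ and $\textrm{APX}(1,\frac{1-y}{2})$, are themselves affine fee-setting mechanisms by construction: the first uses fee schedule $w(P)=(1-2)P+1=1-P$, and the second uses the constant fee schedule $w(P)=\frac{1-y}{2}$, which fits the form $(1-\alpha)P+\beta$ with $\alpha=1$. Hence both lie in the class of mechanisms $\{\textrm{APX}(\alpha,\beta):\alpha,\beta\in\mathbb{R}\}$ over which the ``best affine fee-setting mechanism'' is optimized.

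Next I would invoke the monotonicity of supremum: for any seller distribution,
\[
\sup_{\alpha,\beta}\textrm{Rev-APX}(\alpha,\beta)\;\geq\;\max\!\left\{\textrm{Rev-APX}(2,1),\ \textrm{Rev-APX}\!\left(1,\tfrac{1-y}{2}\right)\right\}.
\]
The right-hand side is exactly the expected revenue of the ``best of two'' mechanism from Theorem~\ref{theorem:unif}. Applying that theorem (which bounds this $\max$ by $\tfrac{1}{3}\,\textrm{OPT-Rev}$ from below) yields the desired $3$-approximation. Note that allowing the parameters $(\alpha,\beta)$ of the affine fee schedule to depend on the seller's distribution is consistent with the statement, since the second building block $\textrm{APX}(1,\frac{1-y}{2})$ already depends on $y=\min\{\phi_\mathcal{S}^{-1}(1),\overline{c}\}$, a quantity determined by the seller prior.

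There is essentially no technical obstacle here; the only thing worth double-checking in writing out the proof is that the ``best-of-two'' selection rule, which is permitted to pick different affine mechanisms for different seller distributions, remains dominated by the pointwise supremum over affine parameters. This is immediate because the supremum is taken separately for each seller distribution, so any distribution-dependent choice of $(\alpha,\beta)$, including the one that selects the better of the two candidates, is a feasible selector inside the supremum. The corollary therefore follows in a single line from Theorem~\ref{theorem:unif}.
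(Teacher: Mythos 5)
Your proof is correct and takes the same route as the paper: since both $\textrm{APX}(2,1)$ and $\textrm{APX}(1,\frac{1-y}{2})$ are affine fee-setting mechanisms, the best affine mechanism dominates the better of the two, and Theorem~\ref{theorem:unif} shows that the better of the two is a $3$-approximation to optimal revenue.
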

\begin{proof}
The best affine fee-setting mechanism has expected revenue at least as large as both $\textrm{APX}(2,1)$ and $\textrm{APX}(1,\frac{1-y}{2})$, and hence is a $3-$approximation to the maximum intermediary's revenue.
\end{proof}

\subsection{Approximations for MHR distributions}
\label{sec:MHR}
In this section, we investigate the question of approximating surplus and revenue when neither buyer's virtual value nor seller's virtual cost is affine, but instead we have some proper distributional assumptions on the buyer and seller distributions. We look at the setting that the difference between the values of the seller and the buyer follows MHR distribution, which indicates that the surplus and revenue of an imaginary bidder with value $v-c$ are in constant approximation to each other. Moreover, we assume $v$ is coming from a MHR distribution and hence surplus approximation and revenue approximation are equivalent for this bidder. It is important to mention that many distributions in real economic exchange settings satisfy the following properties under independence assumption of seller and buyer (like uniform, normal, exponential, and etc.). Now, under these assumptions we get constant approximation ratio to both surplus and revenue in expectation with a constant fee schedule. Formally we have the following theorem.
\begin{theorem}
\label{theorem:MHR}
Suppose buyer's value $v$ is MHR, and random variable $v-c$ is also MHR. Then a constant fee-schedule mechanism $w(P)=\eta_{v-c}$ is $e^2-$approximation to optimal surplus, and hence $e^2-$approximation to optimal revenue in expectation, where $\eta_{v-c}$ is monopoly price of random variable $v-c$.
\end{theorem}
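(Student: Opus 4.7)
The plan is to exploit the simple structure the constant fee $w(P)=\mu$, where $\mu := \eta_{v-c}$, induces on the seller's BNE, and then decouple the analysis into two standard ``MHR loses a factor of $e$'' inequalities, one on the buyer side and one on the $(v-c)$ side. First I would apply Corollary~\ref{affine:bnechar} with $\alpha=1,\beta=\mu$ (buyer is regular because $v$ is MHR) to conclude that the seller's unique BNE is $P(c)=\phi_\mathcal{B}^{-1}(c+\mu)$, after which the expected revenue becomes
\begin{equation*}
\mathbb{E}[\textrm{Rev-APX}] \;=\; \mu\cdot\mathbb{E}_c\bigl[\,1-F(\phi_\mathcal{B}^{-1}(c+\mu))\,\bigr].
\end{equation*}
Ex-post IR of seller and intermediary is automatic because $\phi_\mathcal{B}^{-1}(t)\geq t$ for $t\geq 0$ and $\mu\geq 0$.

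The first of the two $e$-factors will come from the imaginary bidder with value $X=v-c$. Since $X$ is MHR, the Hartline--Roughgarden-type bound yields $\mathbb{E}[X^+]\leq e\cdot\mu\cdot\Pr[X\geq\mu]$, i.e.\
\begin{equation*}
\textrm{OPT-Surplus}\;=\;\mathbb{E}[(v-c)\mathds{1}\{v\geq c\}]\;\leq\; e\cdot\mu\cdot\Pr[v-c\geq \mu]\;=\;e\cdot\mu\cdot\mathbb{E}_c\bigl[\,1-F(c+\mu)\,\bigr].
\end{equation*}

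The hard part, and the technical heart of the argument, is the second $e$-factor, which requires a pointwise comparison between $1-F(\phi_\mathcal{B}^{-1}(t))$ and $1-F(t)$. The claim to prove is: for $v$ MHR and any $t\geq 0$ in the relevant range,
\begin{equation*}
1-F(\phi_\mathcal{B}^{-1}(t))\;\geq\;\tfrac{1}{e}\bigl(1-F(t)\bigr).
\end{equation*}
Using the identity $1-F(v)=e^{-H_\mathcal{B}(v)}$, this reduces to showing $H_\mathcal{B}(\phi_\mathcal{B}^{-1}(t))-H_\mathcal{B}(t)\leq 1$. Setting $u=\phi_\mathcal{B}^{-1}(t)$ we have $u-t=\tfrac{1}{h_\mathcal{B}(u)}$, and by monotonicity of $h_\mathcal{B}$,
\begin{equation*}
H_\mathcal{B}(u)-H_\mathcal{B}(t)\;=\;\int_{t}^{u}h_\mathcal{B}(z)\,dz\;\leq\;h_\mathcal{B}(u)\cdot(u-t)\;=\;1.
\end{equation*}
This is where the MHR assumption on $v$ is indispensable; without it the integrand could blow up near $u$ and the bound would fail.

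Finally I would chain the pieces: applying the technical lemma at $t=c+\mu$ pointwise in $c$ and taking expectation,
\begin{equation*}
\mathbb{E}[\textrm{Rev-APX}]\;\geq\;\tfrac{\mu}{e}\,\mathbb{E}_c\bigl[1-F(c+\mu)\bigr]\;=\;\tfrac{\mu}{e}\,\Pr[v-c\geq\mu]\;\geq\;\tfrac{1}{e^{2}}\,\textrm{OPT-Surplus}.
\end{equation*}
The bound on $\textrm{OPT-Rev}$ follows because in any mechanism the per-trade revenue of the intermediary cannot exceed $v-c$, hence $\textrm{OPT-Surplus}\geq \textrm{OPT-Rev}$. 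The main obstacle throughout is the MHR-based pointwise comparison in the middle paragraph; once that lemma is in place the rest is bookkeeping.
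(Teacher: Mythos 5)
Your proposal is correct and matches the paper's proof in its essential structure: decompose the $e^2$ into one $e$-factor from applying the MHR monopoly-revenue bound to the imaginary bidder $v-c$, and a second $e$-factor from a pointwise hazard-rate comparison showing $1-F(P(c))\geq\tfrac{1}{e}\bigl(1-F(\phi_\mathcal{B}(P(c)))\bigr)$. Your integral bound $\int_t^u h_\mathcal{B}(z)\,dz\leq h_\mathcal{B}(u)(u-t)=1$ is the same inequality the paper proves via the supporting-line characterization of convexity of $H_\mathcal{B}$, so the two arguments are equivalent modulo presentation.
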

\begin{proof}
Let $P(c)$ be the BNE strategy of seller. We know random variable $v-c$ is MHR, hence due to Lemma~4.18 in \cite{hartline2012approximation}, monopoly revenue of $v-c$ is an $e-$approximation to maximum surplus of $v-c$ in expectation. In other words,
\begin{equation*}
R_\eta^{v-c}=\eta_{v-c}\textrm{Pr}\{v-c\geq \eta_{v-c}\}\geq \frac{1}{e}\mathbb{E}\{(v-c)_{+}\}=\frac{1}{e}\mathbb{E}\{\textrm{OPT-Surplus}\}.
\end{equation*}
 Moreover, the expected revenue of fee-setting mechanism $w(P)$ is equal to $\textrm{APX-Rev}=\eta_{v-c}\textrm{Pr}\{v\geq P(c)\}$. We claim $\mathbb{E}\{\textrm{APX-Rev}\}\geq \frac{1}{e} R_{\eta_{v-c}}$, which implies the desired approximation bounds. In other words,
\begin{equation*}
\mathbb{E}\{\textrm{APX-Rev}\}\geq \frac{1}{e} R_{\eta_{v-c}}\geq \frac{1}{e^2} \mathbb{E}\{\textrm{OPT-Surplus}\}\geq \frac{1}{e^2}\mathbb{E}\{\textrm{OPT-Rev}\}.
\end{equation*} 
To prove the claim, it is enough to show $\textrm{Pr}\{v\geq P(c)\}\geq\frac{1}{e} \textrm{Pr}\{v-c\geq \eta_{v-c}\}$. Note that from Corollary~\ref{affine:bnechar} we know $\phi_\mathcal{B}(P)=c+\eta_{v-c}$. Hence, conditioned on a fixed $c$ we have
\begin{align}
\label{eq:3}
\textrm{Pr}\{v-c\geq \eta_{v-c}|c\}=\textrm{Pr}\{v\geq c+\eta_{v-c}|c\}=1-F(\phi_\mathcal{B}(P))=e^{-H_\mathcal{B}(\phi_\mathcal{B}(P))}
\end{align}
Now, note that $H_{\mathcal{B}}(x)$ is convex (as $v$ is MHR), so $\forall x: H_{\mathcal{B}}(x)\geq H_{\mathcal{B}}(P)+h_{\mathcal{B}}(P)(x-P)$. Let $x=\phi_\mathcal{B}(P)$, and hence 
\begin{equation}
\label{eq:4}
H_\mathcal{B}(\phi_\mathcal{B}(P))\geq H_{\mathcal{B}}(P)+h_{\mathcal{B}}(P)(\phi_\mathcal{B}(P)-P)=H_{\mathcal{B}}(P)-1,
\end{equation}
where the last equality is true because $\phi_\mathcal{B}(P)=P-\frac{1-F(P)}{f(P)}=P-\frac{1}{h_{\mathcal{B}}(P)}$. Combining (\ref{eq:3}) and (\ref{eq:4}) we have 
\begin{equation}
\label{eq:5}
\textrm{Pr}\{v-c\geq \eta_{v-c}|c\}\leq e e^{-H_{\mathcal{B}}(P)}=e (1-F(P))=e \textrm{Pr}\{v\geq P|c\}.
\end{equation}
By taking expectation from both sides of (\ref{eq:5}) with respect to $c$ we prove what we claimed, which completes the proof of theorem.\qed
\end{proof}
\section{Inapproximability results}
\label{sec:inapprox}
In this section, we give two inapproximability results. The first one shows that
the proper fee schedules
  eBay and Amazon are currently using are not revenue-efficient,
in the sense that for $\textrm{unif}[0,1]$ buyer distribution no \textit{proper} fee schedule can get constant
approximation to the optimal revenue for the worst case seller distribution. Meanwhile, 
as we showed before, there is an improper fee-setting mechanism that always gets $4$-approximation to the optimal revenue. 
The second result 
shows that for $\textrm{unif}[0,1]$ buyer distribution, $\textrm{APX}(\alpha,\beta)$ 
gives seller prior independent constant approximation to the maximum surplus 
for worst-case seller distribution \textit{if and only if} $\alpha-\beta=1$
and $\alpha\neq 1$. 

\subsection{Inapproximability result for proper
fee schedule}
First we investigate the question of how good proper fee schedule works. We define 
a proper fee schedule as the following.
\begin{definition}
A proper fee schedule is an affine fee schedule with parameters $\alpha$ and $\beta$ such that $0\leq \alpha \leq 1$ and $\beta\geq 0$.
\end{definition}

Then we give definitions on the approximability of proper fee schedule.

\begin{definition}
Proper fee schedule revenue gap $RG_{F,G}$ under buyer distribution $F$,
and seller distribution $G$ is the ratio of 
the optimal revenue to 
the approximation revenue 
using the best proper fee schedule.
\end{definition}

\begin{definition}
Proper fee schedule surplus gap $SG_{F,G}$ under buyer distribution $F$,
and seller distribution $G$ is the ratio of 
the maximum surplus to 
the approximation revenue using the best proper fee schedule.
\end{definition}

As a direct consequence of Corollary~\ref{coro:powerdistribution}, we can say optimal revenue is $8-$approximation to optimal surplus in expectation. Hence, for the special case of $\textrm{unif[0,1]}$ we have,
\begin{corollary}\label{corollary:RGSG}
If $F$ is uniform distribution on $[0,1]$, then for any seller distribution
$G$, $RG_{F,G}\geq \frac{1}{8}SG_{F,G}$.
\end{corollary}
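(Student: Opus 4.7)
The plan is to observe that, since $RG_{F,G}$ and $SG_{F,G}$ share the same denominator (the revenue of the best proper fee schedule, call it $A^\star$), the inequality $RG_{F,G}\geq \tfrac{1}{8}SG_{F,G}$ is equivalent to the cleaner statement
\[
\mathbb{E}\{\textrm{OPT-Rev}\}\;\geq\;\tfrac{1}{8}\,\mathbb{E}\{\textrm{OPT-Surplus}\}.
\]
So the task reduces to showing that, when $F=\textrm{unif}~[0,1]$, the optimal revenue is a constant approximation of the maximum surplus, for every seller distribution $G$.

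To get this, I would invoke Corollary~\ref{coro:powerdistribution} with parameters $a=1$ and $\bar v=1$ (the uniform $[0,1]$ distribution is a power distribution with these parameters). That corollary produces an explicit affine fee-setting mechanism, namely $\textrm{APX}(2,1)$ with $w(P)=1-P$, whose expected revenue is at least $\tfrac{1}{8}\,\mathbb{E}\{\textrm{OPT-Surplus}\}$, uniformly over all seller-regular distributions $G$. Note that $\textrm{APX}(2,1)$ is \emph{improper} ($\alpha=2\notin[0,1]$), so it is not a candidate for $A^\star$; however, it is still a feasible (ex-post IR) exchange mechanism by Lemma~\ref{affine:property}(b), so its revenue is no larger than the revenue of Myerson's optimal exchange mechanism.

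Chaining the two inequalities yields
\[
\mathbb{E}\{\textrm{OPT-Rev}\}\;\geq\;\mathbb{E}\{\textrm{Rev-APX}(2,1)\}\;\geq\;\tfrac{1}{8}\,\mathbb{E}\{\textrm{OPT-Surplus}\},
\]
and dividing both sides by $A^\star=\mathbb{E}\{\textrm{Rev-APX}(\alpha^\star,\beta^\star)\}$ for the best \emph{proper} choice of parameters gives $RG_{F,G}\geq \tfrac{1}{8}SG_{F,G}$.

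There is no real technical obstacle here; the whole argument is a bookkeeping step on top of Corollary~\ref{coro:powerdistribution}. The one subtlety worth flagging in the writeup is precisely that the witness mechanism $\textrm{APX}(2,1)$ used to bound the surplus from above is itself \emph{not} proper, so one should not confuse it with $A^\star$; it is used only as a feasible lower bound on the optimal revenue, which is what converts the surplus bound into a revenue bound for any $G$.
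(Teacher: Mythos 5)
Your proof is correct and takes essentially the same route as the paper. The paper presents this corollary as a ``direct consequence of Corollary~\ref{coro:powerdistribution}'' with no further detail; your writeup simply makes explicit the chain $\textrm{OPT-Rev}\geq \textrm{Rev-APX}(2,1)\geq \tfrac{1}{8}\textrm{OPT-Surplus}$ (instantiating $a=1$, $\bar v=1$) and the cancellation of the common denominator $A^\star$, which is exactly the intended reasoning.
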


We now show the following theorem (proved in Appendix~\ref{inapprox:appendix}), 
which shows that $RG_{F,G}$ could be arbitrarily large
even if the buyer distribution is as simple as
the uniform $[0,1]$ distribution. At the same time, 
$\textrm{APX}(2,1)$ is $4$-approximation to the optimal revenue, 
which means proper fee schedule can be arbitrarily worse than 
$\textrm{APX}(2,1)$.
\begin{theorem} 
\label{inapprox:theorem}
When $F$ is uniform distribution on $[0,1]$,
for every constant $d$, there exists 
a regular seller distribution $G$ 
with $RG_{F,G}\geq d$. 
\end{theorem}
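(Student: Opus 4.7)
The plan is to exhibit a family of seller distributions $G_K$ indexed by $K\in\mathbb{N}$ whose optimal revenue is spread across $K$ geometric cost scales, against which no single proper fee schedule can compete; taking $K$ large will force $RG_{F,G_K}\to\infty$. The structural reason proper schedules must fail is as follows: for any proper $(\alpha,\beta)$, Corollary~\ref{affine:bnechar} gives $P(c)=(\alpha+\beta+c)/(2\alpha)\ge 1/2$, and the per-transaction revenue $w(P(c))(1-P(c))$ vanishes linearly as $c\to\alpha-\beta$. So revenue is effectively harvested only from a narrow band of costs near the participation threshold, while $\textrm{OPT-Rev}=\mathbb{E}_c[(1-\phi_\mathcal{S}(c))^2/4]$ collects from all scales simultaneously.

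Concretely I would place mass $p_i\propto 4^i$ near the cost levels $c_i=1-2^{-i}$ for $i=1,\ldots,K$, normalized by $Z=\sum_{j=1}^K 4^j=\Theta(4^K)$. Treating each mass as a (smoothed) atom so $\phi_\mathcal{S}(c_i)\approx c_i$, each level contributes $p_i(1-c_i)^2/4=1/(4Z)$ to the conditional optimal revenue, hence $\textrm{OPT-Rev}\ge K/(4Z)$. For the APX upper bound at a proper $(\alpha,\beta)$ with $t=\alpha-\beta$, I would combine (i) $w(P(c_j))\le w(1)=1-t$ (since proper $w$ is non-decreasing), with (ii) $1-P(c_j)=(t-c_j)/(2\alpha)\le 2^{-j}/(2t)$ (using $\alpha\ge t$ from $\beta\ge 0$, together with $t-c_j\le 2^{-j}$), giving $w(P(c_j))(1-P(c_j))\le (1-t)\,2^{-j}/(2t)$. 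Summing over participating atoms $j\le i^{*}$ where $c_{i^{*}}\le t<c_{i^{*}+1}$, and using $p_j 2^{-j}=2^j/Z$ along with the geometric bound $\sum_{j\le i^{*}}2^j\le 2^{i^{*}+1}\le 2/(1-t)$, yields $\textrm{Rev-APX}(\alpha,\beta)\le 1/(tZ)\le 2/Z$ uniformly (the case $t<c_1=1/2$ contributes $0$ since no seller participates). Combining, $RG_{F,G_K}\ge (K/(4Z))/(2/Z)=K/8$, so $K\ge 8d$ suffices.

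The main obstacle I anticipate is regularity: $G_K$ as described is atomic, while the theorem asks for a regular (continuous, monotone $\phi_\mathcal{S}$) seller. I would replace the $i$-th atom by a narrow uniform bump of width $\delta_i$ centered at $c_i$, and fill the intervening gaps with a tiny baseline density so $g>0$ throughout $[0,1-2^{-K}]$. Within a bump, direct computation gives $\phi_\mathcal{S}(c)=2c-c_i+O(\delta_i)$, which is strictly increasing; monotonicity across bumps then holds once $\delta_i\ll 2^{-(i+1)}=c_{i+1}-c_i$, since the gap in $c$ dominates the $O(\delta_i)$ offset in $\phi_\mathcal{S}$. Taking further $\delta_i\ll 4^{-i}/K$ keeps each bump's OPT contribution within $p_i(1-c_i)^2/4\cdot(1+o(1))$, preserving $\textrm{OPT-Rev}=\Theta(K/Z)$, while the APX-Rev upper bound is unchanged up to a constant factor since it depends only on the $c_i$ and $p_i$. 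A secondary subtlety is verifying uniformity of the APX bound over all proper $(\alpha,\beta)$: the derived bound is monotone in $1/\alpha$, so the worst case is $\alpha=t$ (equivalently $\beta=0$), which is exactly the regime I bounded.
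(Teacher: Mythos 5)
Your quantitative skeleton is sound and is, at heart, a dyadic discretization of the paper's own construction: placing mass $p_i\propto 4^i$ near $c_i=1-2^{-i}$ is the atomic analogue of the paper's continuous density $g_\delta(c)\propto (1-c)^{-3}$ on $[0,1-\sqrt{\delta}]$, under which each dyadic scale contributes equally to $\int(1-c)^2g(c)\,dc$. Your upper bound on $\textrm{Rev-APX}$ is actually cleaner than the paper's: you bound the per-cost revenue by $w(1)\cdot(t-c)/(2\alpha)\le(1-t)(t-c)/(2t)$ and sum a geometric series, uniformly over all proper $(\alpha,\beta)$, whereas the paper passes through the surplus gap (losing a factor $8$ via Corollary~\ref{corollary:RGSG}) and then runs a multi-case optimization over $\epsilon=1-(\alpha-\beta)$ and $\alpha$. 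Your direct attack on $RG_{F,G}$ is a genuine simplification of that part of the argument.

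However, there is a real gap in your regularity argument. You smooth each atom into a narrow uniform bump and ``fill the intervening gaps with a tiny baseline density,'' then argue monotonicity of $\phi_\mathcal{S}$ by comparing its values only \emph{inside} consecutive bumps. But $\phi_\mathcal{S}(c)=c+G(c)/g(c)$ must be non-decreasing on the gaps too: there $G(c)\approx\sum_{j\le i}p_j$ is bounded away from $0$ while $g(c)=\varepsilon$ is tiny, so $\phi_\mathcal{S}$ blows up to roughly $c+\Theta(p_i)/\varepsilon$ in the gap and then drops back to $\approx c_{i+1}+O(\delta_{i+1})$ at the start of the next bump. That drop violates seller-regularity, which the theorem statement requires (and which the BNE characterization in Corollary~\ref{affine:bnechar} presumes). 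The fix is to give the gaps enough density to keep $G(c)/g(c)\lesssim 1-c$, i.e.\ $g(c)\gtrsim G(c)/(1-c)$; carrying this out forces the gap between $c_i$ and $c_{i+1}$ to carry mass $\Theta(p_i)$ and essentially reproduces the density $g(c)\propto(1-c)^{-3}$ --- at which point you may as well work with the continuous family directly (your per-cost bound integrates against it just as easily, giving $\textrm{Rev-APX}=O(\delta)$ versus $\textrm{OPT-Rev}=\Theta(\delta\log(1/\delta))$). As written, though, the constructed $G_K$ is not regular and the proof is incomplete.
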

\proofsketch Based on Corollary \ref{corollary:RGSG}, 
it suffices to show that 
for every constant $d$, 
there exists a regular seller distribution $G$
with $SG_{F,G}\geq d$. 
Assume $F$ is uniform distribution on $[0,1]$. 
Consider the following 
family of distributions with parameter $\delta$ (as in Figure\ref{fig2}), defined on the interval $\left [0,1-\sqrt{\delta}\right ]$,
\begin{equation}
\label{instance:bad}
g_\delta(x)=\frac{2\delta}{(1-\delta)(1-x)^3}, 
~~~G_\delta(x)=\frac{\delta}{1-\delta}\left (\frac{1}{(1-x)^2}-1\right ), 
~~x\in \left [0,1-\sqrt{\delta}\right ].
\end{equation}
the rest of the proof shows that for any $d>0$, $\exists \delta$ such that $RG_{F,G}\geq d$\qed
\begin{figure}
\centering
\tikzstyle{letterlabel}=[thick,font=\fontsize{6}{6},color=black!70!blue]
\begin{tikzpicture}[scale=0.7]
\foreach \a in {0.1}
{	\draw[color=red,domain=0:4.10263] plot (\x,{6*\a/(1-\a)*(1/(1-\x/6)^2-1)}) node[above] {$F_{0.1}$};
}
\foreach \a in {0.1}
{	\draw[color=red,domain=0:2.6,dashed] plot (\x,{6*2*\a/(1-\a)/((1-\x/6)^3)        }) node[above] {$f_{0.1}$};
}
\foreach \a in {0.01}
{	\draw[color=green,domain=0:5.4] plot (\x,{6*\a/(1-\a)*(1/(1-\x/6)^2-1)}) node[above] {$F_{0.01}$};
}
\foreach \a in {0.01}
{	\draw[color=green,domain=0:4.47,dashed] plot (\x,{6*2*\a/(1-\a)/((1-\x/6)^3)        }) node[above] {$f_{0.01}$};
}
\foreach \a in {0.001}
{	\draw[color=blue,domain=0:5.810263,samples=400] plot (\x,{6*\a/(1-\a)*(1/(1-\x/6)^2-1)}) node[right] {$F_{0.001}$};
}
\foreach \a in {0.001}
{	\draw[color=blue,domain=0:5.3,samples=400,dashed] plot (\x,{6*2*\a/(1-\a)/((1-\x/6)^3)   }) node[right] {$f_{0.001}$};
}
\draw (-0.2,6) node {$1$};
\draw[densely dotted,color=red!60] (4.10263,0) -- + (0,8);
\draw[densely dotted,color=green!60] (5.4,0) -- + (0,8);
\draw[densely dotted,color=blue!60] (5.810263,0) -- + (0,8);
\draw[densely dotted] (-1,6) -- (6,6);
\draw[densely dotted] (6,-1) -- (6,6);
\draw (6,-0.3) node {$1$};
\draw (-0.2,-0.3) node {$0$};
\draw (-0.4,0.2222222*6) node {$2/9$};
\draw (-0.45,0.02020202*6+0.05) node {$2/99$};
\draw[->,very thick] (-1,0) -- (7,0);
\draw[->,very thick] (0,-1) -- (0,7);
\end{tikzpicture}
\caption{Family of worst-case seller distributions used in Theorem~\ref{inapprox:theorem}}
\label{fig2}
\end{figure}
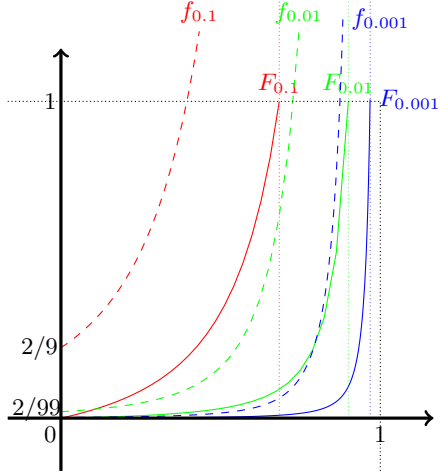
\subsection{Inapproximability result for prior-independent approximation}
For the setting of seller prior-independent, one might still expect the existence of other constant approximations. However, we show our mechanism is the unique fee-setting mechanism that can get constant seller prior-independent approximations to surplus. More formally, we show that in the seller  prior-independent setting when buyer's value is drawn from uniform $[0,1]$ distribution, $w(P)=(1-\alpha)P +\beta$ gives
constant approximation to 
the surplus if and only if $\alpha-\beta=1$. The proof is provided in Appendix~\ref{inapprox:appendix}.
\begin{theorem}
\label{theorem:priorindinapprox}
If the buyer's distribution is uniform $[0,1]$, 
$w(x)=(1-\alpha)x+\beta$, where $\alpha$ and $\beta$ are parameters independent form the seller distribution, then 
the revenue obtained using $w$ is a constant approximation to the 
surplus for every possible
seller's distribution if and only if $\alpha-\beta=1$
and $\alpha\neq 1$. 
Moreover, when $\alpha=2, \beta=1$, 
it achieves the best approximation ratio $8$. 
\end{theorem}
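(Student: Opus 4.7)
The plan is to compute the seller's Bayes-Nash equilibrium posting and the resulting conditional revenue explicitly for a uniform $[0,1]$ buyer, and then show that the pointwise ratio of expected revenue to expected surplus is a positive constant independent of $c$ precisely when $\alpha-\beta=1$ and $\alpha>1$; any other parameter choice is refuted by a suitably chosen point-mass seller distribution.

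Since $\phi_\mathcal{B}(v)=2v-1$ for $F=\textrm{unif}[0,1]$, Corollary~\ref{affine:bnechar} yields the BNE posting $P(c)=\tfrac{c+\alpha+\beta}{2\alpha}$ whenever $c\le\gamma:=\alpha-\beta$, and no trade for $c>\gamma$. Hence $\mathbb{E}[\textrm{Rev-APX}(\alpha,\beta)\mid c]=w(P(c))(1-P(c))$ and $\mathbb{E}[\textrm{OPT-Surplus}\mid c]=(1-c)^2/2$ for $c\in[0,1]$. For the sufficiency direction I substitute $\beta=\alpha-1$ and simplify to get $w(P(c))=\tfrac{(\alpha-1)(1-c)}{2\alpha}$ and $1-P(c)=\tfrac{1-c}{2\alpha}$, so the pointwise ratio $\tfrac{\mathbb{E}[\textrm{Rev-APX}\mid c]}{\mathbb{E}[\textrm{OPT-Surplus}\mid c]}=\tfrac{\alpha-1}{2\alpha^2}$ is independent of $c$. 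Therefore the approximation ratio to surplus is exactly $\tfrac{2\alpha^2}{\alpha-1}$ for every seller distribution. One-variable calculus on $\tfrac{2\alpha^2}{\alpha-1}$ over $\alpha>1$ (derivative $\tfrac{2\alpha(\alpha-2)}{(\alpha-1)^2}$) locates the unique minimizer at $\alpha=2$, yielding ratio $8$, with both $\alpha\to 1^+$ and $\alpha\to\infty$ sending the ratio to infinity (matching the statement's exclusion of $\alpha=1$).

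For necessity I argue by cases on $\gamma=\alpha-\beta$. If $\gamma<1$, a point mass at any $c_0\in(\gamma,1)$ makes the unconstrained BNE posting $P(c_0)>1$, so the seller's best reply on $[0,1]$ attains utility $0$ only by effectively not trading; revenue is $0$ while surplus is $(1-c_0)^2/2>0$. If $\gamma>1$, a short computation gives $w(P(1))=\tfrac{(1-\gamma)(1+\alpha)}{2\alpha}<0$; by continuity of $c\mapsto w(P(c))$, there is a neighborhood of $c=1$ inside $[0,\gamma)$ on which $w(P(c))<0$ and $1-P(c)>0$, so a point mass there yields strictly negative expected revenue against strictly positive surplus. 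The edge cases $\alpha=1$ with $\gamma=1$ (giving $w\equiv 0$) and $\alpha<1$ with $\gamma=1$ (giving $w(P)=(1-\alpha)(P-1)\le 0$ on $[0,1]$) both produce non-positive revenue with positive surplus, ruling out any constant approximation.

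The main obstacle is the necessity case $\gamma>1$, where one must exhibit a seller cost within the trading region $[0,\gamma)$ on which the intermediary actually loses money, rather than merely one on which no trade occurs. The clean algebraic observation $w(P(1))=\tfrac{(1-\gamma)(1+\alpha)}{2\alpha}$ handles this uniformly in $\alpha>0$ and is the cleanest statement driving that subcase. Everything else reduces to elementary manipulation of the BNE formula from Corollary~\ref{affine:bnechar} plus a standard one-variable minimization.
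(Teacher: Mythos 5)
Your proof is correct, and it differs from the paper's in two instructive ways. For sufficiency, the paper integrates by parts to express both benchmarks as integrals against $G$, obtaining $\textrm{Rev-APX}(\alpha,\beta)=\frac{\alpha-1}{2\alpha^2}\int_0^{1-\epsilon}G(c)(1-c)\,dc+\frac{\epsilon}{2\alpha^2}\int_0^{1-\epsilon}G(c)\,dc$ versus $\textrm{Max-Surplus}=\int_0^1G(c)(1-c)\,dc$, and reads the ratio $\frac{2\alpha^2}{\alpha-1}$ off the $\epsilon=0$ case; your observation that for $\beta=\alpha-1$ the conditional revenue $w(P(c))(1-P(c))=\frac{(\alpha-1)(1-c)^2}{4\alpha^2}$ is \emph{pointwise} proportional to the conditional surplus $(1-c)^2/2$ is more transparent, gives the same exact (hence tight) constant for every seller distribution, and makes the optimality of $\alpha=2$ an immediate one-variable minimization. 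For necessity, the paper splits the case $\alpha-\beta>1$ into the subcases $\alpha\geq 1$ (a uniform distribution near $c=1$ forcing $\int G(c)(1-c)\,dc\leq 0$) and $\alpha<1$ (exploiting $\beta<0$ with a distribution near $c=0$), whereas your single identity $w(P(1))=\frac{(1-\gamma)(1+\alpha)}{2\alpha}<0$ handles all $\alpha>0$ at once and directly exhibits costs at which the intermediary loses money while surplus is positive; this is a genuine simplification. Two small caveats: the BNE characterization of Corollary~\ref{affine:bnechar} is stated for regular sellers, so your point masses should be replaced by narrow uniform (or otherwise regular) distributions concentrated at the chosen $c_0$ --- exactly as the paper does --- which changes nothing in the argument; and, like the paper, you implicitly assume $\alpha>0$ (for $\alpha\leq 0$ the monotone BNE of Theorem~\ref{bnechar} fails to exist in the given form), so neither proof addresses that degenerate regime.
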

\section{Extension to multi-buyers case}
Some of our results can extend to multi-buyers case, when the buyers are regular and i.i.d. In fact, if there are $n$ buyers with regular i.i.d. values $v_1,v_2,\ldots,v_n$ drawn from distribution $F$, one can replace the pool of buyers with one \textit{effective} buyer $v=\underset{i}{\max}~v_i$ and still get the same revenue in expectation for any fee-setting mechanisms and optimal Myerson mechanism (because all buyers have the same non-decreasing virtual value function), and also the same surplus in expectation for VCG mechanism. Now, using the following lemma and the above reduction we can extend Theorem~\ref{theorem:MHR} to multi-buyers case, whose proof is found in Appendix~\ref{lemmaproofs:appendix}.
\begin{lemma}
\label{iidMHR}
Suppose $v_1,v_2,\ldots,v_n$ are i.i.d. random variables drawn from MHR distribution $F$. Then $v=\underset{i}{\max}~v_i$ is also MHR.
\end{lemma}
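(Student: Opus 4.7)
The plan is to show directly that the hazard rate $h_{\max}$ of $v = \max_i v_i$ is non-decreasing in $x$. First I would write down the CDF and PDF of the maximum in terms of those of the individual draws: $F_{\max}(x) = F(x)^n$ and $f_{\max}(x) = n F(x)^{n-1} f(x)$. From these, the hazard rate of $v$ is
\[
h_{\max}(x) \;=\; \frac{n F(x)^{n-1} f(x)}{1 - F(x)^n}.
\]

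The main step is a factorization. Using the identity $1 - F^n = (1-F)\bigl(1 + F + F^2 + \cdots + F^{n-1}\bigr)$, I would rewrite
\[
h_{\max}(x) \;=\; \underbrace{\frac{f(x)}{1-F(x)}}_{=\, h(x)} \cdot \frac{n F(x)^{n-1}}{1 + F(x) + \cdots + F(x)^{n-1}} \;=\; h(x) \cdot \frac{n}{F(x)^{-(n-1)} + F(x)^{-(n-2)} + \cdots + 1},
\]
where the second equality comes from dividing the numerator and denominator of the right-hand factor by $F(x)^{n-1}$. This neatly expresses $h_{\max}$ as a product of the single-draw hazard rate and a purely CDF-dependent correction factor.

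With this factorization in hand, I would observe that $h_{\max}$ is the product of two non-negative quantities that are each non-decreasing in $x$. The factor $h(x)$ is non-decreasing by the MHR hypothesis on $F$. For the second factor, $F(x)$ is non-decreasing in $x$, so each term $F(x)^{-k}$ for $k \geq 1$ is non-increasing in $x$, hence the denominator is non-increasing and the correction factor itself is non-decreasing. A product of two non-negative non-decreasing functions is non-decreasing, so $h_{\max}$ is non-decreasing, which is exactly the MHR property for $v$.

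I do not expect any substantive obstacle beyond the algebra above; the only minor issue is the left boundary of the support where $F(x) = 0$ and the reciprocal $F(x)^{-k}$ formally blows up. I would handle this by restricting attention to the interior of the support (where $F(x) > 0$ and both expressions are well-defined) and noting that $h_{\max}$ extends continuously to the boundary, so monotonicity on the interior suffices.
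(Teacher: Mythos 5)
Your factorization $h_{\max}(x) = h(x) \cdot \dfrac{n}{\sum_{k=0}^{n-1} F(x)^{-k}}$ and the observation that each factor is non-decreasing is exactly the argument in the paper's proof. Your extra remark about the left boundary where $F(x)=0$ is a small refinement the paper omits, but otherwise the two proofs coincide.
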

Now, by combining Lemma~\ref{iidMHR} and Theorem~\ref{theorem:MHR} we have the following direct corollary.
\begin{corollary}[Multi-buyers setting] Suppose $F$ is a MHR distribution and there are $n$ i.i.d. buyers whose values are drawn from $F$. Seller's cost $c$ is drawn from $G$ and is independent from all buyers. Moreover, assume the random variable $\underset{i}{\max}~v_i-c$ is MHR. Then the revenue of constant fee-setting mechanism $w(P)=\eta_{\underset{i}{\max}~v_i -c}$, where $\eta_{\underset{i}{\max}~v_i-c}$ is the monopoly price for the distribution of $\underset{i}{\max}~v_i -c$, is $e^2-$approximation to optimal surplus and revenue in expectation.
\end{corollary}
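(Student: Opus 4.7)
The plan is to reduce the multi-buyer setting to the single-buyer setting already handled by Theorem~\ref{theorem:MHR}, and then invoke Lemma~\ref{iidMHR} to verify the hypotheses of that theorem in the reduced instance. Concretely, I will define the effective buyer $v \triangleq \max_i v_i$, argue that the revenue/surplus of each of the three relevant mechanisms (the affine fee-setting mechanism, Myerson's optimal mechanism, and VCG) in the multi-buyer instance equals the corresponding quantity in the single-buyer instance whose buyer has value $v$, and then apply Theorem~\ref{theorem:MHR} to that single-buyer instance.

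For the reduction step, I would observe that since $F$ is MHR it is in particular regular, so all $n$ buyers share the same monotone virtual value function $\phi_\mathcal{B}$. This has three consequences I would spell out. First, in any fee-setting mechanism, the seller posts a single price $P$ and a trade occurs iff some buyer accepts, i.e.\ iff $\max_i v_i \geq P$; thus the indicator of trade and the monetary flows depend on $(v_1,\ldots,v_n)$ only through $v = \max_i v_i$, so the Bayes--Nash characterization in Theorem~\ref{bnechar} (and its affine specialization Corollary~\ref{affine:bnechar}) applied with the distribution of $v$ gives exactly the same BNE strategy $P(c)$ and the same expected fee-setting revenue as the single-buyer instance. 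Second, Myerson's optimal mechanism allocates to the buyer with the largest (nonnegative) virtual value; since $\phi_\mathcal{B}$ is monotone and the $v_i$ are i.i.d., this is always the max-value buyer, and the ironed virtual surplus maximization collapses to the single-buyer problem with value $v$. Third, the VCG surplus is $\mathbb{E}[(\max_i v_i - c)_+] = \mathbb{E}[(v-c)_+]$, matching the surplus of the single-buyer instance.

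Having established the reduction, I would apply Lemma~\ref{iidMHR} to conclude that the distribution of $v=\max_i v_i$ is MHR. Together with the assumption that $\max_i v_i - c = v-c$ is MHR, both hypotheses of Theorem~\ref{theorem:MHR} hold for the reduced single-buyer instance with buyer value $v$ and seller cost $c \sim G$. The theorem then yields that the constant fee schedule $w(P) = \eta_{v-c} = \eta_{\max_i v_i - c}$ extracts an $e^2$-approximation to the expected optimal surplus and an $e^2$-approximation to the expected optimal revenue in the single-buyer instance. By the equivalence of expected revenues and surpluses established in the reduction, the same $e^2$-approximation guarantee carries over to the original $n$-buyer setting, which is exactly the claim.

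There is essentially no hard step here: the single-buyer theorem does all the mechanism-design work, and Lemma~\ref{iidMHR} supplies the MHR preservation under the max operation. The one point requiring care, which I would flag explicitly, is the justification that the reduction preserves the \emph{seller's} BNE strategy and hence the fee-setting revenue; this follows because the seller only ever observes aggregate acceptance by the buyer side, so from her point of view the $n$-buyer pool is indistinguishable from a single buyer with value $\max_i v_i$, and Theorem~\ref{bnechar} applied with the CDF of $v$ produces an identical equilibrium.
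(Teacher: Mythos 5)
Your proof follows the same reduction the paper sketches in the paragraph preceding the corollary: replace the $n$ i.i.d.\ buyers by an effective buyer $v = \max_i v_i$, apply Lemma~\ref{iidMHR} to certify that $v$ is MHR, and then invoke Theorem~\ref{theorem:MHR} on the single-buyer instance. One point deserves care, and it is slightly glossed over both in your writeup and in the paper's informal justification: the claim that Myerson's optimal revenue is preserved under the reduction is not literally true. The multi-buyer Myerson revenue is $\mathbb{E}\{(\phi_\mathcal{B}(\max_i v_i)-\phi_\mathcal{S}(c))_+\}$, whereas the single-effective-buyer Myerson revenue is computed with the virtual value function of the distribution $F^n$, which differs from $\phi_\mathcal{B}$ (in fact the virtual value of $F^n$ is pointwise smaller than $\phi_\mathcal{B}$, so the effective-buyer optimal revenue is strictly smaller in general). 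This does not break the corollary: what transfers exactly is the fee-setting revenue (the seller's problem depends on the buyer side only through $\max_i v_i$, so Corollary~\ref{affine:bnechar} and the expected revenue coincide in both instances) and the maximum surplus $\mathbb{E}\{(\max_i v_i - c)_+\}$. Theorem~\ref{theorem:MHR} gives an $e^2$-approximation to the single-buyer surplus, which equals the multi-buyer surplus, which in turn upper-bounds the multi-buyer optimal revenue. So your conclusion is correct, but the revenue guarantee should be routed through the surplus equality and the fact that revenue never exceeds surplus, rather than through a claimed equality of Myerson revenues across the two instances.
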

\section*{Acknowledgment}
The authors would like to express their very great appreciations to Prof. Jason Hartline for his valuable and constructive suggestions during this research work, especially in developing Theorem~\ref{theorem:MHR}. His willingness to give his time so generously has been very much appreciated.

\appendix
\bibliography{ebay}
\bibliographystyle{plainnat}
\section{Basics of mechanism design for exchange}
\setcounter{section}{1}
\label{mechbasic}
In this section, we provide details of solution concepts and definitions used in this paper. The provided details are following the normal trend of mechanism design literature, but they have been adapted for the exchange setting.

 Similar to the definition of allocations and payments in single dimensional mechanism design framework,  suppose $x_\mathcal{S}\in\{0,1\}$ and $p_\mathcal{S}\in[0,\infty)$ are seller's allocation and payment, and $x_\mathcal{B}\in\{0,1\}$ and $p_\mathcal{B}\in[0,\infty)$ are buyer's allocation and payment. In the context of exchange, feasible allocations are $(x_\mathcal{B},x_\mathcal{S})=\{(1,1),(0,0)\}$.  We assume both seller and buyer are risk-neutral, i.e. $u_\mathcal{S}=p_\mathcal{S}-x_s c$ and $u_\mathcal{B}=x_B v-p_\mathcal{B}$ are seller's and buyer's utilities under allocations $\mathbf{x}=(x_\mathcal{B},x_\mathcal{S})$  and payments $\mathbf{p}=(p_\mathcal{B},p_\mathcal{S})$. We start by defining an \textit{exchange mechanism} when we have one seller and one buyer as follows.

\begin{definition} An \textit{exchange mechanism} for  1-seller, 1-buyer is a tuple $\mathscr{M}=(A_{\mathcal{S}}\times A_{\mathcal{B}},\mathbf{x}(.),\mathbf{p}(.) )$, where  $\mathbf{x}=(x_\mathcal{S}(.) ,x_\mathcal{B}(.))$, and $\mathbf{p}= (p_\mathcal{S}(.) ,p_\mathcal{B}(.))$. $A_\mathcal{S}$ and $A_\mathcal{B}$ are set of mechanism actions of seller and buyer respectively, $x_\mathcal{S}:A_{\mathcal{B}}\times A_{\mathcal{S}}\rightarrow \{0,1\}$ and $x_\mathcal{B}:A_{\mathcal{B}}\times A_{\mathcal{S}}\rightarrow \{0,1\}$ are \textit{seller's allocation}  and \textit{buyer's allocation} respectively, and $p_\mathcal{S}:A_{\mathcal{S}}\rightarrow [0,\infty)$ and $p_\mathcal{B}:A_{\mathcal{B}}\rightarrow [0,\infty) $ are \textit{seller's payment}  and \textit{buyer's payment} respectively. Moreover, mechanism $\mathscr{M}$ and strategy profile $\left(b_\mathcal{B}(.),b_{\mathcal{S}}(.)\right)$ where $b_\mathcal{B}:[0,\bar{v})\rightarrow \mathcal{A}_{\mathcal{B}}, b_\mathcal{S}:[0,\bar{c})\rightarrow \mathcal{A}_{\mathcal{S}}$ implement \emph{allocation rules} $x_\mathcal{S}(c,v)\triangleq x_\mathcal{S}(b_{\mathcal{S}}(c),b_\mathcal{B}(v))$, $x_\mathcal{B}(c,v)\triangleq x_\mathcal{B}(b_{\mathcal{S}}(c),b_\mathcal{B}(v))$, and \emph{payment rules} $p_\mathcal{S}(v,c)\triangleq p_\mathcal{S}(b_{\mathcal{S}}(c),b_\mathcal{B}(v))$, $p_\mathcal{B}(v,c)\triangleq p_\mathcal{B}(b_{\mathcal{S}}(c),b_\mathcal{B}(v))$. 
\end{definition}
\begin{definition}
Suppose we have an exchange mechanism $\mathscr{M}$ and a strategy profile $(b_\mathcal{S},b_\mathcal{B})$ that implement allocation/payment rules  $(x_\mathcal{S}(v,c),x_{\mathcal{B}}(v,c))$ and $(p_\mathcal{S}(v,c),p_{\mathcal{B}}(v,c))$ respectively. Now, \emph{interim allocation rule} (or \emph{interim payment rule}) of an agent (either seller or buyer) is defined as the expectation of the allocation rule (or payment rule) of that agent conditioned on her private information (cost if seller, value if buyer). We denote interim allocation/payment rules by $(x_\mathcal{S}(c),x_\mathcal{B}(v))$ and $(p_\mathcal{S}(c),p_\mathcal{B}(v))$ by a bit of notation abuse (as allocation/payment rules use the same notation as interim allocation/payment rules but with different function inputs).
\end{definition}

Bayesian mechanism design in general aims to define the rules of a game of incomplete information, a.k.a. the mechanism,  played by the agents in the environment. Mechanism designer hopes that a solution of this game has desirable properties, in particular good objective functions such as revenue of the mechanism or surplus of the agents. To analyze the solution of the game, we need to look at the correct solution concept applicable to our application. To do so, we first formalize the game which is played by seller and buyer, and then we talk about solution concepts we use in this paper. As it can be seen from the above definition, a \textit{strategy} for an agent (buyer or seller) is a mapping from its type space (i.e. value space of the buyer or cost space of the seller) to its corresponding mechanism's action space ($A_\mathcal{S}$ for seller or $A_\mathcal{B}$ for buyer). We now define a \textit{direct revelation exchange mechanism}.
\begin{definition}
A \textit{direct revelation exchange mechanism} is a single-round, sealed bid exchange mechanism which has action spaces equal to the corresponding type spaces (i.e., the seller bids its cost for the item under trade and the buyer bids its value for that item).
\end{definition}
We now can define a Bayes-Nash Equilibrium strategy profile of an exchange mechanism as follows.
\begin{definition}
A \emph{Bayes-Nash Equilibrium} for an exchange mechanism $\mathscr{M}=(A_{\mathcal{S}}\times A_{\mathcal{B}},\mathbf{x}(.),\mathbf{p}(.) )$ under common prior $F\times G$ is a strategy profile $\left(b_\mathcal{B}(.),b_{\mathcal{S}}(.)\right)$ where $b_\mathcal{B}:[0,\bar{v})\rightarrow \mathcal{A}_{\mathcal{B}}, b_\mathcal{S}:[0,\bar{c})\rightarrow \mathcal{A}_{\mathcal{S}}$, and 
\begin{align*}
&-\forall v\in [0,\bar{v}), \forall ~b'_{\mathcal{B}}: \\&~~\mathbb{E}_{c}\{u_{\mathcal{B}}[x_\mathcal{B}(b_{\mathcal{B}}(v),b_{\mathcal{S}}(c)) ,p_\mathcal{B}(b_{\mathcal{B}}(v),b_{\mathcal{S}}(c))]\}\geq \mathbb{E}_{c}\{u_{\mathcal{B}}[x_\mathcal{B}(b'_{\mathcal{B}}(v),b_{\mathcal{S}}(c)) ,p_\mathcal{B}(b'_{\mathcal{B}}(v),b_{\mathcal{S}}(c))]\}
\end{align*}
\vspace{-18pt}
\begin{align*}
&-\forall c\in [0,\bar{c}), \forall ~b'_{\mathcal{S}}: \\&~~\mathbb{E}_{v}\{u_{\mathcal{S}}[x_\mathcal{S}(b_{\mathcal{B}}(v),b_{\mathcal{S}}(c)) ,p_\mathcal{S}(b_{\mathcal{B}}(v),b_{\mathcal{S}}(c))]\}\geq \mathbb{E}_{v}\{u_{\mathcal{S}}[x_\mathcal{S}(b_{\mathcal{B}}(v),b'_{\mathcal{S}}(c)) ,p_\mathcal{S}(b_{\mathcal{B}}(v),b'_{\mathcal{S}}(c))]\}
\end{align*}

\end{definition}
Similar to the definition of BNE, we adapt the solution concepts of \emph{Bayesian Incentive Compatibility (BIC)}, \emph{Dominant Strategy Incentive Compatibility (DSIC)}, \emph{Interim Individual Rationality (Interim IR)}, and \emph{Ex-post Individual Rationality (Ex-post IR)} to the setting of exchange as follows.
\begin{definition}
A direct revelation exchange mechanism $\mathscr{M}$ is \emph{BIC} if truthful bidding (i.e. seller bids her cost, buyer bids her value) is a BNE.
\end{definition}
\begin{definition}
An exchange mechanism $\mathscr{M}$ is \emph{Interim IR} if neither buyer nor seller get a negative revenue in expectation at the interim stage of the game, i.e. when they know their private types.
\end{definition}
\begin{definition}
An exchange mechanism $\mathscr{M}$ is \emph{Ex-post} IR if neither buyer nor seller get a negative revenue at the ex-post stage of the game, i.e. when all the private types are revealed to all the players.
\end{definition}

Below We look at the intermediary problem as a single dimensional mechanism design framework, and  characterize the optimal Myerson's mechanism for single item exchange problem. We implement this optimal revenue scheme using a more intuitive indirect mechanism in Section~\ref{indopt}. This second indirect mechanism will be the base-line of all of our proposed simple mechanisms in this paper. 
\begin{theorem}[Myerson's mechanism for exchange]
Suppose both seller and buyer are regular. Then the following direct BIC and Interim IR mechanism\footnote{To be more precise, this mechanism is also Dominant Strategy Incentive Compatible (DSIC) and Ex-post individual rational under no-positive transfer assumption.}, which is maximizing \emph{virtual surplus} ($\triangleq x_\mathcal{B}\phi_\mathcal{B}- x_\mathcal{S}\phi_\mathcal{S}$), is revenue optimal in expectation.
\begin{itemize}
\item Solicit seller's and buyer's bids for their cost and value respectively. Let these bids be ($b_\mathcal{S},b_{\mathcal{B}})$.
\item If $\phi_\mathcal{B}(b_\mathcal{B})\geq \phi_\mathcal{S}(b_\mathcal{S})$ the trade happens, o.w. no item will be transferred. 
\item If trade happens, then charge the buyer its critical price, i.e. $\tau_{B}=\phi_\mathcal{B}^{-1}(\phi_\mathcal{S}(b_\mathcal{S}))$, and give the seller its critical price, i.e. $\tau_{S}=\phi_\mathcal{S}^{-1}(\phi_\mathcal{B}(b_\mathcal{B}))$. Otherwise, nobody will be charged.
\end{itemize}
\end{theorem}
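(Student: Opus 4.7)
The plan is to follow the standard Myerson template, but adapted to the two-sided exchange setting. First I would reduce the intermediary's expected revenue to an expression involving virtual preferences. For any BIC and interim-IR direct mechanism with interim allocation and payment rules $(x_\mathcal{B}(v),p_\mathcal{B}(v))$ for the buyer and $(x_\mathcal{S}(c),p_\mathcal{S}(c))$ for the seller, the standard single-dimensional revenue-equivalence argument (integrating the envelope condition from the interim utility of the lowest type) yields
\[
\mathbb{E}_v[p_\mathcal{B}(v)] \;=\; \mathbb{E}_v\bigl[x_\mathcal{B}(v)\,\phi_\mathcal{B}(v)\bigr],
\qquad
\mathbb{E}_c[p_\mathcal{S}(c)] \;=\; \mathbb{E}_c\bigl[x_\mathcal{S}(c)\,\phi_\mathcal{S}(c)\bigr],
\]
where the seller identity requires the sign-flipped argument (treat the seller as a ``bidder'' with type $-c$, or integrate by parts over $[0,\overline c]$ using $x_\mathcal{S}(c)$ non-increasing and the interim-IR anchor at $c=\overline c$). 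Since feasibility of the exchange forces $x_\mathcal{S}(v,c)=x_\mathcal{B}(v,c)$ pointwise (trade either happens for both or for neither), the intermediary's expected profit is
\[
\mathbb{E}\bigl[p_\mathcal{B} - p_\mathcal{S}\bigr] \;=\; \mathbb{E}_{v,c}\Bigl[x(v,c)\bigl(\phi_\mathcal{B}(v)-\phi_\mathcal{S}(c)\bigr)\Bigr],
\]
which is exactly the expected virtual surplus of the allocation rule.

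Second, I would maximize the above expectation pointwise. Since we can choose $x(v,c)\in\{0,1\}$ for each $(v,c)$ independently, the expression is maximized by setting $x(v,c)=1$ if and only if $\phi_\mathcal{B}(v)\geq\phi_\mathcal{S}(c)$, which is exactly the proposed allocation rule. To argue this allocation is actually implementable as a BIC mechanism, I need monotonicity of the interim allocations. Here is where regularity enters: since $\phi_\mathcal{B}$ is non-decreasing, the set $\{c:\phi_\mathcal{S}(c)\leq\phi_\mathcal{B}(v)\}$ grows (weakly) as $v$ increases, so $x_\mathcal{B}(v)=\Pr_c[\phi_\mathcal{S}(c)\leq\phi_\mathcal{B}(v)]$ is non-decreasing in $v$; symmetrically, $x_\mathcal{S}(c)$ is non-increasing in $c$ because $\phi_\mathcal{S}$ is non-decreasing. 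Monotonicity being necessary and sufficient for BIC-implementability in single-dimensional environments, the allocation can be paired with suitable payments.

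Third, I would construct and verify the specific payments. The critical-price payments $\tau_\mathcal{B}=\phi_\mathcal{B}^{-1}(\phi_\mathcal{S}(b_\mathcal{S}))$ (what the buyer pays when trade happens) and $\tau_\mathcal{S}=\phi_\mathcal{S}^{-1}(\phi_\mathcal{B}(b_\mathcal{B}))$ (what the seller receives) are exactly the threshold types at which each agent flips from losing to winning given the other's report. By the Myerson payment identity, these payments implement the chosen allocation in dominant strategies: truthful reporting is optimal pointwise (not merely in expectation), and the utility of the marginal type is zero, giving ex-post IR and hence interim IR. Finally, interim IR of the optimal mechanism implies the anchoring constant in the revenue-equivalence formulas is zero, so no BIC/interim-IR mechanism can do better than the maximum-virtual-surplus mechanism constructed above, proving optimality.

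The main obstacle I expect is the seller-side Myerson identity: care is needed because the seller is a ``reverse'' agent whose utility decreases in her reported type and whose IR constraint binds at the top of the type space rather than the bottom, so the integration-by-parts that converts expected payment into expected virtual cost must be performed with the correct sign and boundary condition. Once that identity is in hand, the rest of the argument is routine pointwise maximization plus the standard Myerson payment construction.
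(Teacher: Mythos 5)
Your proof is correct and is exactly the canonical Myerson argument that this statement rests on; the paper itself states this theorem as a background adaptation of Myerson's framework (Appendix~\ref{mechbasic}) and provides no proof, so there is nothing to diverge from. You correctly identify the one genuinely delicate point --- the seller-side payment identity $\mathbb{E}_c[p_\mathcal{S}(c)]=u_\mathcal{S}(\overline{c})+\mathbb{E}_c[x_\mathcal{S}(c)\phi_\mathcal{S}(c)]$ with the IR anchor at $c=\overline{c}$ and the allocation non-increasing in $c$ --- and the remaining steps (pointwise maximization of virtual surplus, monotonicity from regularity, threshold payments giving DSIC and ex-post IR for the agents) are all sound.
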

Other than the objective of revenue, another important benchmark in mechanism design is \textit{surplus}. Vickrey-Clarke-Groves (VCG) mechanism maximizes the surplus and satisfies the strongest incentive compatibility and individual rationality solution concepts. We adapt VCG mechanism to the setting of exchange as follows.
\begin{theorem}
The following  DSIC and ex-post IR mechanism is maximizing the  surplus:
\begin{itemize}
\item Solicit seller's and buyer's bids for their cost and value respectively. Let these bids be ($b_\mathcal{S},b_{\mathcal{B}})$.
\item  If $b_\mathcal{B}\geq b_\mathcal{S}$ the trade happens, o.w. no item will be transferred. 
\item If trade happens, then pay the seller the amount of $b_\mathcal{B}$ and charge the buyer  $b_\mathcal{S}$. 
\end{itemize}
\end{theorem}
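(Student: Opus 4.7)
The plan is to verify the three asserted properties of the mechanism one at a time: dominant-strategy incentive compatibility, ex-post individual rationality, and surplus-maximization, exploiting the fact that this is the natural Vickrey/pivot mechanism adapted to the bilateral-trade setting. The key structural observation is that the allocation is a simple threshold comparison $b_\mathcal{B} \geq b_\mathcal{S}$, and that the payment received by each agent depends only on the other agent's reported bid (the seller receives $b_\mathcal{B}$, the buyer pays $b_\mathcal{S}$). Because of this externalization of payments, each agent's reported bid effectively chooses only whether trade happens, at a ``price'' dictated by the other party.

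First I would argue DSIC for the seller. Fix the buyer's report $b_\mathcal{B}$ arbitrarily and let the seller have true cost $c$. Under any bid $b_\mathcal{S}$, the seller's ex-post utility is $b_\mathcal{B} - c$ if $b_\mathcal{S} \leq b_\mathcal{B}$ and $0$ otherwise, so the seller strictly prefers trade to happen iff $b_\mathcal{B} \geq c$. Reporting $b_\mathcal{S} = c$ achieves exactly this: any upward deviation $b'_\mathcal{S} > c$ risks blocking a trade in the range $c \leq b_\mathcal{B} < b'_\mathcal{S}$ (forfeiting a positive gain), and any downward deviation $b'_\mathcal{S} < c$ risks triggering a trade in the range $b'_\mathcal{S} \leq b_\mathcal{B} < c$ (incurring a loss). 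A completely symmetric argument with the seller's bid fixed shows the buyer's dominant strategy is $b_\mathcal{B} = v$.

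Next I would handle ex-post IR, which is essentially immediate from the previous step: under truthful reports, trade happens iff $v \geq c$, in which case the seller's utility is $v - c \geq 0$ and the buyer's utility is $v - c \geq 0$; when no trade occurs, both utilities are $0$. For surplus-maximization, the social surplus under the feasible allocations $(x_\mathcal{B}, x_\mathcal{S}) \in \{(0,0),(1,1)\}$ equals $(v-c)\,\mathds{1}[\text{trade}]$, whose pointwise maximum is $(v-c)_+$. Under truthful bidding the mechanism trades precisely when $v \geq c$, thereby attaining this maximum at every realization, hence also in expectation.

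The main obstacle here is not technical but conceptual, and it should be flagged in the write-up: although the mechanism is DSIC, ex-post IR, and surplus-optimal for the two agents, it runs a deficit on every trade (the intermediary pays $b_\mathcal{B} = v$ and collects $b_\mathcal{S} = c \leq v$), so it is not budget-balanced for the broker. This is exactly the manifestation of the Myerson–Satterthwaite impossibility invoked in Section~1, and it motivates why the main body of the paper restricts attention to fee-setting mechanisms that sacrifice efficiency for broker profitability. Minor care is also needed to specify tie-breaking at $b_\mathcal{B} = b_\mathcal{S}$, though this is a measure-zero event under the continuous distributional assumptions of the paper and does not affect any expected-value claim.
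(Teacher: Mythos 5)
Your proof is correct. The paper states this theorem in Appendix A without proof, treating it as the standard adaptation of the Vickrey/VCG mechanism to bilateral trade, and your argument (externalized payments giving each agent a threshold decision, hence truthful bidding is dominant; ex-post IR and pointwise surplus maximization following immediately from trade occurring exactly when $v\geq c$) is precisely the canonical justification the paper implicitly relies on. Your remark about the intermediary's deficit and its connection to Myerson--Satterthwaite is accurate and consistent with the paper's framing, though it is not part of the claim being proved.
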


Finally, we need to define the notion of Prior-Independence with respect to seller market or buyer market for exchange mechanisms.
\begin{definition}
An exchange mechanism $M$ is known to be seller  prior-independent with respect to the seller if no seller distributional information is needed by the mechanism. 
\end{definition}
\section{Implantation of Myerson's optimal by fee-setting}
\label{appsec3}
One important question related to designing fee-setting mechanisms is whether they can be optimal or not.
 The following theorem, proved first in \cite{LN07}, provides an answer to this question. It states with a proper choice of function $w(P)$ one can design a fee-setting mechanism that extracts the same revenue in expectation as in Myerson's optimal mechanism. We provide a simple proof of this result using Revenue Equivalence theorem~\cite{RB78}.
\begin{theorem}\cite{LN07}
Consider an exchange setting with regular buyer/seller. Define $\mathscr{P}(c)\triangleq\phi_{\mathcal{B}}^{-1}(\phi_{\mathcal{S}}(c))$.  Consider a fee-setting exchange mechanism with fee schedule $w(P)=P-\mathbb{E}_v\{\mathscr{P}^{-1}(v)|v\geq P\}$. Then:
\begin{itemize}
\item $P(c)=\mathscr{P}(c)$ and $b(v)=v$ is a BNE of this mechanism.
 \item The interim allocation/payment rules are equal to those of Myerson's optimal mechanism.
\end{itemize}
\end{theorem}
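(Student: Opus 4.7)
The plan is to verify both claims in turn. For part (1), the buyer's side is immediate: in a fee-setting mechanism the buyer is only asked whether to accept the posted price $P$, so $b(v)=v$ (equivalently, buy iff $v\geq P$) is a dominant strategy and there is nothing to check. The real work is on the seller's side, where I would apply Theorem~\ref{bnechar} to the candidate $P(c)=\mathscr{P}(c)$. The monotonicity condition is free: since both buyer and seller are regular, $\phi_{\mathcal{S}}$ is non-decreasing and $\phi_{\mathcal{B}}^{-1}$ is non-decreasing, so $\mathscr{P}$ is non-decreasing. Everything then boils down to verifying the fixed-point equation
\[
\phi_{\mathcal{B}}(P(c))=P(c)-\frac{P(c)-w(P(c))-c}{1-w'(P(c))}
\]
at $P=\mathscr{P}(c)$.

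The key computation is the derivative of $w$. Writing $P-w(P)=\frac{\int_P^{\bar v}\mathscr{P}^{-1}(v)f(v)\,dv}{1-F(P)}$ and differentiating by Leibniz's rule, I expect to obtain
\[
1-w'(P)\;=\;h_{\mathcal{B}}(P)\bigl(P-w(P)-\mathscr{P}^{-1}(P)\bigr).
\]
Evaluating at $P=\mathscr{P}(c)$ gives $\mathscr{P}^{-1}(P)=c$, so $P-w(P)-c=\frac{1-w'(P)}{h_{\mathcal{B}}(P)}$, and the right-hand side of the BNE equation collapses to $\mathscr{P}(c)-\frac{1}{h_{\mathcal{B}}(\mathscr{P}(c))}=\phi_{\mathcal{B}}(\mathscr{P}(c))$. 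But by definition of $\mathscr{P}$, $\phi_{\mathcal{B}}(\mathscr{P}(c))=\phi_{\mathcal{S}}(c)$, so the identity reduces to $\phi_{\mathcal{B}}(\mathscr{P}(c))=\phi_{\mathcal{B}}(\mathscr{P}(c))$, which is trivially true. This completes part (1).

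For part (2), the interim allocation is read off directly: under the BNE in part (1), trade occurs iff $v\geq \mathscr{P}(c)=\phi_{\mathcal{B}}^{-1}(\phi_{\mathcal{S}}(c))$, which by monotonicity of $\phi_{\mathcal{B}}$ is exactly $\phi_{\mathcal{B}}(v)\geq \phi_{\mathcal{S}}(c)$, i.e.\ Myerson's virtual-surplus maximizing allocation. For payments, I would invoke the revenue equivalence theorem separately for each side: the fee-setting mechanism and Myerson's mechanism induce the same interim allocation rule for the buyer (and for the seller), and both are individually rational with the lowest-participating type earning zero utility (buyer with $v=0$ and seller with $c=\bar c$ both never trade in either mechanism and receive zero payment). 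Hence interim payments coincide.

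The main obstacle is the derivative computation for $w'(P)$ and confirming that the algebraic cancellation produces exactly $\phi_{\mathcal{B}}(\mathscr{P}(c))$ on the right-hand side; once this identity is established, the rest follows immediately from the definitions of $\mathscr{P}$ and from revenue equivalence. A minor side issue is ensuring regularity of the conditional expectation (e.g.\ $1-F(P)>0$ on the support), which I would handle by restricting attention to $P$ in the effective range of $\mathscr{P}$ and arguing that outside this range trade occurs with probability zero so the fee schedule is irrelevant.
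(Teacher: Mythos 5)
Your proof is correct, but it runs in the opposite direction from the paper's. The paper establishes the second bullet first by direct computation -- the allocation rules visibly coincide, the buyer's charge $\mathscr{P}(c)$ is exactly Myerson's critical price, and the seller's interim payment $(p(c)-w(p(c)))(1-F(p(c)))$ equals Myerson's $\mathbb{E}_v\{\mathscr{P}^{-1}(v)\mathds{1}\{v\geq p(c)\}\}$ \emph{by the very definition of $w$} -- and then deduces the BNE property from the BIC-ness of Myerson's mechanism via revenue equivalence, never touching the differential characterization. You instead verify the BNE directly through Theorem~\ref{bnechar}: your Leibniz-rule identity $1-w'(P)=h_{\mathcal{B}}(P)\bigl(P-w(P)-\mathscr{P}^{-1}(P)\bigr)$ is correct, and the resulting collapse of the fixed-point equation to $\phi_{\mathcal{B}}(\mathscr{P}(c))=\phi_{\mathcal{B}}(\mathscr{P}(c))$ is a legitimate, self-contained proof of the first bullet. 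What your route buys is independence from Myerson's machinery for the equilibrium claim; what the paper's route buys is that the payment comparison is a one-line consequence of how $w$ was chosen. On that last point, your use of revenue equivalence for the seller's payments is the one place where I would switch to the paper's direct computation: your claim that the seller of cost $\bar c$ never trades (needed to pin down the additive constant) is not guaranteed in general -- it fails whenever $\phi_{\mathcal{S}}(\bar c)\leq \phi_{\mathcal{B}}(\bar v)$ -- whereas the identity $\mathbb{E}_v\{\mathscr{P}^{-1}(v)\mid v\geq p(c)\}=p(c)-w(p(c))$ gives the seller's interim payment equality unconditionally and with no boundary bookkeeping.
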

\begin{proof}
Suppose seller plays $p(c)=\mathscr{P}(c)=\phi_{\mathcal{B}}^{-1}(\phi_{\mathcal{S}}(c))$ and buyer plays $b(v)=v$.  Now, let $(x_{\mathcal{S}}(v,c), x_{\mathcal{B}}(v,c)) $ be the interim allocation rule  and $(p_{\mathcal{S}}(v,c), p_{\mathcal{B}}(v,c)) $ the interim payment rule of this mechanism under this strategy profile. Moreover, let $(x^M_{\mathcal{S}}(v,c), x^M_{\mathcal{B}}(v,c)) $ and $(p_{\mathcal{S}}(v,c), p_{\mathcal{B}}(v,c)) $ be the interim allocation rule and interim payment rule of Myerson's optimal mechanism respectively (Note that we know Myerson's is DSIC). In this mechanism, trade happens when $v \geq \phi_{\mathcal{B}}^{-1}(\phi_{\mathcal{S}}(c))$ which is equivalent to $\phi_{\mathcal{B}}(v)\geq\phi_{\mathcal{S}}(c)$. So both of our mechanism and Myerson's optimal mechanism have the same allocation rule for both buyer and seller, i.e. $x_\mathcal{B}=x^M_\mathcal{B}=x_\mathcal{S}=x^M_\mathcal{S}=\mathds{1}\{\phi_{\mathcal{B}}(v)\geq\phi_{\mathcal{S}}(c)\}$. As in the Myerson's mechanism the critical price of the buyer is $\tau_\mathcal{B}=\phi_{\mathcal{B}}^{-1}(\phi_{\mathcal{S}}(c))$ and we charge the buyer by $\mathscr{P}(c)$ if trade happens, the payment rule of buyers are the same in both mechanism. For the interim payment rule of the seller in our mechanism we have $p_{\mathcal{S}}(c)=\mathbb{E}_v\{\left(p(c)-w(p(c))\right)\mathds{1}\{v\geq p(c)\}\}=(p(c)-w(p(c)))(1-F(p(c))$. In the Myerson's mechanism, we have $p^M_{\mathcal{S}}(c)=\mathbb{E}_v\{\tau_\mathcal{S}(v)\mathds{1}\{c\leq \tau_\mathcal{S}(v)\}\}=\mathbb{E}_v\{\mathscr{P}^{-1}(v)|v\geq p(c)\}\textrm{Pr}_v\{v\geq p(c)\}=(p(c)-w(p(c))(1-F(p(c)))=p_{\mathcal{S}}(c)$. Hence both mechanisms have the same interim allocation/payment rules. As Myerson's mechanism is BIC, we conclude that $p(c)=\mathscr{P}(c)$ and $b(v)=v$ are also BNE of our mechanism due to revenue equivalence theorem~\cite{RB78}.
\end{proof}
\begin{corollary}
The indirect fee-setting exchange mechanism with fee schedule $w(P)=\mathbb{E}_v\{\mathscr{P}^{-1}(v)|v\geq P\}$ extracts the maximum revenue in expectation under BNE strategy profile $(\mathscr{P}(c), v)$ for seller and buyer.
\end{corollary}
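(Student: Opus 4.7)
The plan is to treat the corollary as a direct consequence of the theorem immediately preceding it, invoking revenue equivalence only to translate agreement of interim quantities into agreement of expected revenues. The theorem has already established two facts about the fee-setting mechanism with the specified $w(\cdot)$: (i) the strategy profile $(P(c)=\mathscr{P}(c),\,b(v)=v)$ is a Bayes--Nash equilibrium, and (ii) under this equilibrium the interim allocation rules $(x_{\mathcal{S}}(c),x_{\mathcal{B}}(v))$ and interim payment rules $(p_{\mathcal{S}}(c),p_{\mathcal{B}}(v))$ are identical to those of Myerson's optimal exchange mechanism. Given (i)--(ii), the corollary is essentially a matter of observing that the intermediary's revenue is a linear functional of the interim payment rules.

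Concretely, I would first write the intermediary's expected revenue in the fee-setting mechanism as
\begin{equation*}
\mathrm{Rev} \;=\; \mathbb{E}_v\bigl[p_{\mathcal{B}}(v)\bigr] \;-\; \mathbb{E}_c\bigl[p_{\mathcal{S}}(c)\bigr],
\end{equation*}
since every dollar the intermediary keeps is either collected from the buyer or withheld from the seller. Because the theorem certifies $p_{\mathcal{B}}(v)=p^M_{\mathcal{B}}(v)$ and $p_{\mathcal{S}}(c)=p^M_{\mathcal{S}}(c)$ pointwise in the interim stage, the same identity holds for Myerson's optimal mechanism with the same numerical value. Hence the expected revenue of the fee-setting mechanism equals the expected revenue of Myerson's optimal mechanism.

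Next I would appeal to the optimality of Myerson: among all BIC, interim-IR direct mechanisms for the exchange setting (with regular buyer and seller), the Myerson mechanism characterized in Appendix~\ref{mechbasic} maximizes expected revenue. Since the fee-setting mechanism in question is an indirect BNE implementation and the revelation principle guarantees that every BNE of any indirect mechanism can be simulated by a BIC direct mechanism with the same interim allocation and payment rules (and therefore the same expected revenue), the fee-setting mechanism cannot achieve strictly more revenue than Myerson. Combined with the equality from the previous step, this shows it achieves exactly the maximum possible expected revenue, proving the corollary.

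The only subtlety I anticipate is handling the individual rationality and feasibility side-conditions inherited from the exchange setting (ex-post non-negativity of the seller's utility, and the coupling $x_{\mathcal{S}}=x_{\mathcal{B}}$ forced by the single item changing hands). These were already dealt with inside the theorem via the choice $w(P)=P-\mathbb{E}_v\{\mathscr{P}^{-1}(v)\mid v\geq P\}$ and the equilibrium strategies, so for the corollary they do not require new work; I would simply remark that since the interim rules coincide with Myerson's, all constraints satisfied by Myerson are satisfied here as well, and no other mechanism in the feasible class can do better.
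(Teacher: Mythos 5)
Your proposal is correct and takes essentially the route the paper intends: the corollary is stated without proof precisely because it follows immediately from the preceding theorem's two bullet points, with the identity $\mathrm{Rev}=\mathbb{E}_v[p_{\mathcal{B}}(v)]-\mathbb{E}_c[p_{\mathcal{S}}(c)]$, the coincidence of interim payment rules with Myerson's, and Myerson optimality via the revelation principle supplying exactly the glue you describe. (Minor note: the corollary's displayed fee schedule omits the leading ``$P-$'' that appears in the theorem; you correctly treat it as the theorem's schedule $w(P)=P-\mathbb{E}_v\{\mathscr{P}^{-1}(v)\mid v\geq P\}$.)
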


For the special case when the seller's virtual cost is affine, there is an interesting result due to \citet{LN13} which shows the fee-setting mechanism that implements the optimal Myerson is also affine. More formally, we have the following theorem (modified a bit) due to \citet{LN13}.
\begin{theorem} \cite{LN13} Suppose the buyer is buyer-regular. Then the following are equivalent statements:
\begin{itemize}
\item Cost of the seller is drawn from a reverse-generalized Pareto distribution with parameters $\mu,\lambda$ and $\xi$. 
\item An affine fee mechanism, i.e. with fee schedule $w(P)=(1-\alpha)P+\beta$ where $\alpha=\frac{1}{1+\xi}$ and $\beta=-\frac{\frac{1}{\lambda}+\xi\mu}{1+\xi}$,  is intermediary optimal for all buyer distributions.
\end{itemize}
\end{theorem}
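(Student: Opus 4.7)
The plan is to leverage the implementation theorem of Appendix~B, which identifies the revenue-optimal fee schedule as $w(P) = P - \mathbb{E}_v\{\mathscr{P}^{-1}(v) \mid v \geq P\}$ with $\mathscr{P}(c) = \phi_\mathcal{B}^{-1}(\phi_\mathcal{S}(c))$, inducing the BNE strategy $P(c)=\mathscr{P}(c)$. The crux is that Corollary~\ref{affine:bnechar} pins down the seller's BNE under any affine fee schedule $w(P)=(1-\alpha)P+\beta$ as $P(c) = \phi_\mathcal{B}^{-1}\bigl((c+\beta)/\alpha\bigr)$. Thus the affine mechanism implements Myerson's allocation $\mathds{1}\{\phi_\mathcal{B}(v)\geq \phi_\mathcal{S}(c)\}$ \emph{if and only if} the identity $(c+\beta)/\alpha = \phi_\mathcal{S}(c)$ holds on the seller's support, i.e., iff $\phi_\mathcal{S}$ is affine with slope $1/\alpha$ and intercept $\beta/\alpha$. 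This single observation drives both directions of the equivalence.

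For direction (1) $\Rightarrow$ (2), I would start from the reverse-generalized Pareto hypothesis and apply Corollary~\ref{cor:gpd} to obtain $\phi_\mathcal{S}(c) = (1+\xi)c + (\tfrac{1}{\lambda}+\xi\mu)$. Solving the identity $\phi_\mathcal{S}(c) = (c+\beta)/\alpha$ forces the stated parameter choices $\alpha = 1/(1+\xi)$ and $\beta$ equal (up to the sign convention used in defining $w$) to $(\tfrac{1}{\lambda}+\xi\mu)/(1+\xi)$. With these parameters the seller's BNE threshold under the affine fee schedule coincides pointwise with Myerson's threshold $\phi_\mathcal{B}^{-1}(\phi_\mathcal{S}(c))$ for \emph{every} regular buyer distribution $F$, so the induced interim allocation rules of the two mechanisms agree. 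Revenue equivalence (as already invoked in Appendix~B to match Myerson's seller-interim payment) then gives that the affine mechanism extracts exactly the optimal revenue — uniformly in the buyer distribution.

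For the converse (2) $\Rightarrow$ (1), I would fix the parameters $\alpha,\beta$ guaranteed to make the affine mechanism optimal for every regular $F$. For each such $F$, Myerson's mechanism allocates precisely on the set $\{v \geq \phi_\mathcal{B}^{-1}(\phi_\mathcal{S}(c))\}$, whereas the affine mechanism allocates on $\{v \geq \phi_\mathcal{B}^{-1}((c+\beta)/\alpha)\}$. If these two thresholds disagreed at some $c$ in the seller's support, I could choose a buyer distribution whose virtual value function is monotone through that interval to produce a strictly positive revenue gap, contradicting optimality. Hence $(c+\beta)/\alpha = \phi_\mathcal{S}(c)$ identically, so $\phi_\mathcal{S}$ is affine; Lemma~\ref{lemma:gpd} then promotes this into the claim that $c$ follows a reverse-generalized Pareto distribution, and matching coefficients inverts the parameter correspondence to recover $(\mu,\lambda,\xi)$ as functions of $(\alpha,\beta)$.

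The main obstacle I anticipate is the converse argument: turning ``same thresholds for all buyer distributions'' into a pointwise identity on the seller support. Technically I must rule out, in the $(2)\Rightarrow(1)$ direction, the degenerate case where $\phi_\mathcal{B}^{-1}$ happens to collapse the two different inputs to the same output on measure-zero pathologies; the cleanest way is to let $\phi_\mathcal{B}$ range over a family of strictly increasing functions (e.g., perturbations of the identity on a subinterval), so that equality of $\phi_\mathcal{B}^{-1}$ values forces equality of arguments. A secondary bookkeeping issue is the sign of $\beta$: since reverse-Pareto seller distributions give rise to an \emph{improper} affine schedule (as our Main Result 1 also exploits), I need to confirm that the BNE characterization of Corollary~\ref{affine:bnechar} and the interim payment formula remain valid in that regime, which they do because the derivation only uses differentiability of $w$.
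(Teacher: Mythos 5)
The paper does not actually prove this theorem: it is quoted from \citet{LN13} (``modified a bit'') immediately after the Appendix~B implementation theorem, so there is no in-paper proof to compare against. Judged on its own terms, your sketch is essentially sound. The forward direction is correct, and in fact can be made even shorter than you propose: with $\phi_\mathcal{S}$ affine, plug $\mathscr{P}^{-1}(v)=\phi_\mathcal{S}^{-1}(\phi_\mathcal{B}(v))=\frac{\phi_\mathcal{B}(v)-(1/\lambda+\xi\mu)}{1+\xi}$ into the Appendix~B fee schedule and use the identity $\mathbb{E}\{\phi_\mathcal{B}(v)\mid v\geq P\}=P$ (valid for every buyer distribution by integration by parts); the optimal fee schedule is then \emph{literally} the affine one with slope $1-\frac{1}{1+\xi}$, so no separate revenue-equivalence step is needed. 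Your converse argument is the right one, and you correctly identify its only delicate point: optimality for all $F$ forces the BNE threshold $\phi_\mathcal{B}^{-1}((c+\beta)/\alpha)$ to coincide with Myerson's $\phi_\mathcal{B}^{-1}(\phi_\mathcal{S}(c))$ almost everywhere (any positive-measure disagreement yields strictly smaller virtual surplus, hence strictly smaller revenue), and choosing $F$ with strictly increasing $\phi_\mathcal{B}$ lets you cancel $\phi_\mathcal{B}^{-1}$ and conclude $\phi_\mathcal{S}(c)=(c+\beta)/\alpha$, after which Lemma~\ref{lemma:gpd} finishes. You should also note that Corollary~\ref{affine:bnechar} is stated under a seller-regularity hypothesis, whereas its derivation only uses invertibility of $\phi_\mathcal{B}$, so invoking it before you have established seller regularity is legitimate.

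One concrete point worth making explicit rather than waving at ``sign conventions'': your coefficient matching gives $\beta=\frac{1/\lambda+\xi\mu}{1+\xi}$ (no minus sign), and this is the \emph{correct} value. A check with $\xi=1$, $\lambda=1$, $\mu=-2$ (so $c$ is uniform on $[1,2]$ and $\phi_\mathcal{S}(c)=2c-1$, consistent with Corollary~\ref{cor:gpd}) yields the optimal schedule $w(P)=\tfrac{1}{2}P-\tfrac{1}{2}$, i.e.\ $\alpha=\tfrac12$, $\beta=-\tfrac12$, matching your formula and contradicting the minus sign in the statement as printed. So the discrepancy is a typo (or a different parametrization of the reverse distribution) in the quoted statement, not an error in your derivation; you would do well to say so explicitly rather than leave it as an unresolved bookkeeping worry.
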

\begin{remark}
In principle, one can run the dual of a fee-setting mechanism by swapping the roles of buyer and seller: mechanism asks the buyer for a price $P$ and post the price for the seller. If seller is willing to sell the item with price $P$, intermediary takes $w(P)$ as its share and charges the buyer by $w(P)+P$. Although guarantee bounds for these mechanisms are equivalent to those of ordinary fee-settings,  these fee-setting mechanisms are often not used  by online exchange platforms in reality, such as Amazon or eBay. Hence, they are out of focus of this paper.
\end{remark}
\section{Details of the proof of Theorem~\ref{theorem:main1}}
\label{detail:proof}
\begin{proof}
We showed the following relations while sketching the proof
of Theorem~\ref{theorem:main1} in Section~\ref{mainresult}.
\begin{align}
&\mathbb{E}\{\textrm{Rev-APX}|c\}=\frac{w(P)^{\frac{\alpha}{\alpha-1}}}{\beta^{\frac{1}{\alpha-1}}}\label{eq:first}\\
&\mathbb{E}\{\textrm{OPT-Rev}|c\}\leq \alpha\mathbb{E}\left\{w(v)\mathds{1}\{v\geq \phi_\mathcal{B}(P)\}|c\right\}\label{eq:sec}\\
&\mathbb{E}\{\textrm{OPT-Surplus}|c\}=\mathbb{E}\{w(v)\mathds{1}\{v\geq \phi_\mathcal{B}(\phi_\mathcal{B}(P))\}|c\}\label{eq:third}
\end{align}
Now, we find equivalent expressions for upper-bounds in (\ref{eq:first}) and (\ref{eq:sec}) as follows.
\begin{align}
&\alpha\mathbb{E}\left\{w(v)\mathds{1}\{v\geq \phi_\mathcal{B}(P)\}|c\right\}=\alpha \int_{t\geq \phi_\mathcal{B}(P)}(1-F(t))dt=\alpha \int_{t\geq \phi_\mathcal{B}(P)}\left (\frac{w(t)}{\beta}\right )^{\frac{1}{\alpha-1}}dt\nonumber\\
&=\frac{1}{\alpha\beta^{\frac{1}{\alpha-1}}}w(\phi_\mathcal{B}(P))^{\frac{\alpha}{\alpha-1}}=\alpha^{\frac{1}{\alpha-1}} \frac{w(P)^{\frac{\alpha}{\alpha-1}}}{\beta^{\frac{1}{\alpha-1}}},
\end{align}
where in the last equality we use the fact that $w(\phi_\mathcal{B}(P))=\alpha w(P)$, due to property (a) in Lemma~\ref{affine:property}. Also, we have
\begin{align}
&\alpha\mathbb{E}\left\{w(v)\mathds{1}\{v\geq \phi_\mathcal{B}(\phi_\mathcal{B}(P))|c\right\}=\alpha \int_{t\geq \phi_\mathcal{B}(\phi_\mathcal{B}(P))}(1-F(t))dt=\alpha \int_{t\geq \phi_\mathcal{B}(\phi_\mathcal{B}(P))}\left (\frac{w(t)}{\beta}\right )^{\frac{1}{\alpha-1}}dt\nonumber\\
&=\frac{1}{\alpha\beta^{\frac{1}{\alpha-1}}}w(\phi_\mathcal{B}\left (\phi_\mathcal{B}(P))\right )^{\frac{\alpha}{\alpha-1}}=\alpha^{\frac{\alpha+1}{\alpha-1}} \frac{w(P)^{\frac{\alpha}{\alpha-1}}}{\beta^{\frac{1}{\alpha-1}}},
\end{align}
where in the last equality we use the fact that $w(\phi_\mathcal{B}(\phi_\mathcal{B}(P)))=\alpha^2 w(P)$, due to property (a) in Lemma~\ref{affine:property}. Comparing the above upper-bounds on optimal revenue and surplus with the revenue of the affine fee-setting mechanism given in~\eqref{eq:first} completes the proof of the desired approximation factors.\qed
\end{proof}
\section{Proof of Theorem~\ref{theorem:unif}}
\label{proof:unif}
\begin{proof}
For uniform $[0,1]$ distribution, 
$F(x)=x$, $f(x)=1$, $\phi(x)=2x-1$. 
So $\alpha=2,\beta=1, P(c)=\frac{c+3}{4}$.
We first derive an upperbound on $\textrm{OPT}$. We have
\begin{align}
\textrm{OPT}
&=\int_{c=0}^{y}\left(\int_{0.5+0.5\phi_\mathcal{S}(c)}^{1}(2v-1-\phi_\mathcal{S}(c))dv\right) g(c)dc\nonumber\\
&=\int_{c=0}^{y} \Big( 1-(0.5+0.5\phi_\mathcal{S}(c))^2-0.5(1+\phi_\mathcal{S}(c))(0.5-0.5\phi_\mathcal{S}(c))\Big) g(c)dc\nonumber\\
&=\frac{1}{4}\int_{c=0}^{y}\Big(1-\phi_\mathcal{S}(c)\Big )^2 g(c)dc=\frac{1}{4}\int_{c=0}^{y}\left(1-c-\frac{G(c)}{g(c)}\right)^2 g(c)dc\nonumber\\
&=\frac{1}{4}\int_{c=0}^{y}\left(1+c^2+\frac{G^2(c)}{g^2(c)}-2c-2\frac{G(c)}{g(c)}+2c\frac{G(c)}{g(c)}\right)g(c)dc\nonumber\\
&=\frac{1}{4}\left( G(y)+y^2G(y)-2\int_{c=0}^y cG(c)dc -2yG(y)+2\int_{c=0}^{y}G(c)dc- 2\int_{c=0}^{y}cG(c)dc+2\int_{c=0}^{y}G(c)dc\right)
\nonumber\\
&+\frac{1}{4}\int_{c=0}^{y}\left(\frac{G(c)}{g(c)}\right)^2g(c)dc=\frac{(1-y)^2G(y)}{4}+\frac{1}{4}\int_{c=0}^{y}\left(\frac{G(c)}{g(c)}\right)G(c)dc\nonumber\\
&\leq \frac{(1-y)^2G(y)}{4}+\frac{1}{4}\int_{c=0}^{y}\left(1-c\right)G(c)dc,
\end{align}

where the last inequality comes from the fact that due to seller regularity, for $c\in[0,y]:\phi_\mathcal{S}(c)\leq 1\Rightarrow \frac{G(c)}{g(c)}\leq (1-c)$. Now we have
\begin{align}
\textrm{OPT}&\leq \frac{(1-y)^2G(y)}{4}+\frac{1}{4}\int_{c=0}^{y}\left(1-c\right)G(c)dc\nonumber\\
&=\frac{(1-y)^2G(y)}{4}-\frac{1}{8}(1-c)^2G(c)\bigr|_{c=0}^{c=y}+\frac{1}{8}\int_{c=0}^{y}(1-c)^2g(c)dc\nonumber\\
&=\frac{(1-y)^2G(y)}{8}+\frac{1}{8}\int_{c=0}^{y}(1-c)^2g(c)dc.
\end{align}
Let $\textrm{OPT}_1\triangleq \frac{(1-y)^2G(y)}{8}$ and $\textrm{OPT}_2\triangleq \frac{1}{8}\int_{c=0}^{y}\left(1-c\right)^2g(c)$. We now show $\textrm{Rev-APX}(1,\frac{1-y}{2})\geq \textrm{OPT}_1 $ and $\textrm{Rev-APX}(2,1)\geq \frac{\textrm{OPT}_2}{2} $, and hence conclude the best of these two mechanisms is always a $3$-approximation to $\textrm{OPT}$. To show this, we look at the exact expression for $\textrm{Rev-APX}(\alpha,\beta)$. 
\begin{align}
\textrm{Rev-APX}(\alpha,\beta)&=\mathbb{E}_{c,v}\{w(P(c)\mathds{1}\{v\geq P(c)\}\}\nonumber\\
&=\int_{c=0}^{\alpha-\beta}\left(\frac{(1-\alpha)(c+\alpha)+\beta(1+\alpha)}{2\alpha}\right)\left( \frac{\alpha -c -\beta}{2\alpha}\right)g(c)dc\nonumber\\
&=\frac{1}{4\alpha^2}\int_{c=0}^{\alpha-\beta}\Big((1-\alpha)(c+\alpha)+\beta(1+\alpha)\Big) \left(\alpha-c-\beta\right)g(c)dc.
\label{eq3}
\end{align}
Now, we first find a lower bound for $\textrm{Rev-APX}(1,\frac{1-y}{2})$. By applying integration by parts and using the fact that $G(.)$ is monotone non-decreasing we have
\begin{align}
\textrm{Rev-APX}(1,\frac{1-y}{2})&=\frac{1}{4}\int_{c=0}^{\frac{1+y}{2}}(1-y)\left (\frac{1+y}{2}-c\right )g(c)dc\nonumber\\
&=\frac{1-y}{4}\left( \left (\frac{1+y}{2}-c\right )G(c)\bigr|_{0}^{\frac{1+y}{2}}\right)+\frac{1-y}{4}\int_{c=0}^{\frac{1+y}{2}}G(c)dc\nonumber\\
&=\frac{1-y}{4}\int_{c=0}^{\frac{1+y}{2}}G(c)dc\geq\frac{1-y}{4} \int_{c=y}^{\frac{1+y}{2}}G(c)dc\geq \frac{1-y}{4} \int_{c=y}^{\frac{1+y}{2}}G(y)dc\nonumber\\
&=\frac{(1-y)^2G(y)}{8}=\textrm{OPT}_1.
\end{align}
Note that in the above calculation, $y=\min \{\phi_\mathcal{S}^{-1}(1),\overline{c}\}\leq 1$, as $\phi_\mathcal{S}(1)\geq 1$, and hence $\frac{1+y}{2}\geq y$. 
Based on the previous calculation, we know $\textrm{Rev-APX}(2,1)\geq \frac{1}{16}\int_{c=0}^{y}(1-c)^2g(c)dc=\frac{\textrm{OPT}_2}{2}$, which completes the proof.
\end{proof}
\section{Proof of inapproximibility results}
\label{inapprox:appendix}
\begin{proof}[Proof of Theorem~\ref{inapprox:theorem}]Consider the family of seller distributions proposed in (\ref{instance:bad}). First of all, for this family of distributions we have $\phi_\mathcal{S}(c)=c+\frac{G_\delta(c)}{g_\delta(c)}=c+\frac{\frac{1}{(1-x)^2}-1}{\frac{2}{(1-x)^3}}=\frac{1+x}{2}-\frac{(1-x)^3}{2}$ is a non-decreasing function and hence seller is regular. Next step to prove the theorem is coming up with an expression for maximum social surplus in terms of parameter $\delta$. We have,
\begin{align}
\textrm{Max-Surplus}_\delta&=\frac{1}{2}\mathbb{E}_c\left \{(1-c)^2\right \}\nonumber
=\frac{\delta}{1-\delta}\int_0^{1-\sqrt\delta}\frac{(1-c)^2}{(1-c)^3}dc=\frac{\delta}{1-\delta}\ln\frac{1}{\sqrt \delta}=\frac{\delta}{2(1-\delta)}\ln\frac{1}{\delta}.
\end{align}
Let $\alpha-\beta=1-\epsilon$.
Define $\xi\triangleq \max(\sqrt\delta,\epsilon)$.
Based on (\ref{eq3}), for every possible
pair $(\alpha,\beta)$ under the distribution $G_\delta$,
we have,
\begin{align}
&\textrm{APX}_\delta(\alpha, \alpha-1+\epsilon)=\frac{1}{4\alpha^2}\mathbb{E}_c\left\{\Big((\alpha-1)(1-c)+\epsilon(\alpha+1)\Big)\left(1-c-\epsilon\right)_+\right\}
\\
=&\frac{1}{4\alpha^2}\int_{0}^{\min(1-\sqrt{\delta},1-\epsilon)}{\Big((\alpha-1)(1-c)+\epsilon(\alpha+1)\Big)\left(1-c-\epsilon\right)g_\delta(c)dc}\nonumber\\
=&\frac{\alpha-1}{4\alpha^2}\int_{0}^{1-\xi}(1-c)^2g_\delta(c)dc+\frac{\epsilon}{2\alpha^2}\int_{0}^{1-\xi}(1-c)g_\delta(c)dc-\frac{\epsilon^2(\alpha+1)}{4\alpha^2}\int_{0}^{1-\xi}g_\delta(c)dc.\nonumber
\end{align}
By plugging $g_\delta(c)=\frac{2\delta}{(1-\delta)(1-c)^3}$ for $c\in [0,1-\xi]$ and computing integrals we have
\begin{align}
\textrm{APX}_\delta(\alpha, \alpha-1+\epsilon)=\frac{(\alpha-1)\delta}{2\alpha^2(1-\delta)}\ln \frac{1}{\xi}+\frac{\delta\epsilon}{\alpha^2(1-\delta)}\frac{(1-\xi)}{\xi}-\frac{\epsilon^2(\alpha+1)\delta}{4\alpha^2(1-\delta)}\frac{(1-\xi^2)}{\xi^2}.
\end{align}

Now we have the expressions for both $\textrm{Surplus}$ and 
$\textrm{APX}_\delta(\alpha, \beta)$. 
Below we want to show, if $\delta$ is small enough, 
$SG_{F,G}$ under $G_\delta$ can be arbitrarily large, which means
for every constant $d$, we can find a $G_\delta$ with $SG_{F,G}\geq d$. 
In order to prove this, for fixed $\delta$ we discuss the possible
values of $(\alpha,\epsilon)$, and then compute the ratio of
$\textrm{APX}_\delta(\alpha, \beta)$
to $\textrm{Surplus}$. If the ratio goes to zero as 
$\delta$ goes to zero, the theorem is proved.

We now consider two cases:
\begin{itemize}
\item \textit{Case 1} ($\epsilon\leq \sqrt{\delta}$): In this case we have $\xi=\sqrt{\delta}$, and hence
\begin{equation}
\textrm{APX}_\delta(\alpha, \alpha-1+\epsilon)=\frac{(\alpha-1)\delta}{4\alpha^2(1-\delta)}\ln \frac{1}{\delta}+\Delta_\delta(\epsilon),
\end{equation}
where $\Delta_\delta(\epsilon)\triangleq\frac{\sqrt\delta(1-\sqrt \delta)}{\alpha^2(1-\delta)}\epsilon-\frac{(\alpha+1)}{4\alpha^2}\epsilon^2$. This function is quadratic with respect to its argument and its maximum over the interval $[0,\sqrt\delta]$ happens at either  $\epsilon^*=\frac{2\sqrt\delta(1-\sqrt \delta)}{(\alpha+1)(1-\delta)}$ or $\sqrt\delta$. Now, we develop an upperbound on the ratio of $\textrm{APX}_\delta(\alpha,\alpha-1+\epsilon)$ and $\textrm{Max-Surplus}$ when $\epsilon\leq \sqrt\delta$ as follows:
\begin{equation}
\label{ineq1}
\forall \epsilon\in[0,\sqrt \delta]: \frac{\textrm{APX}_\delta(\alpha,\alpha-1+\epsilon)}{\textrm{Max-Surplus}_\delta}\leq \frac{\frac{\sqrt\delta(1-\sqrt \delta)}{\alpha^2(1-\delta)}\epsilon}{\frac{\delta}{2(1-\delta)}\ln\frac{1}{\delta}}\leq \frac{2(1-\sqrt \delta)}{\alpha^2\ln(1/\delta)} ,
\end{equation}
where the first inequality comes from the fact that $\alpha\leq 1$( hence $\frac{(\alpha-1)\delta}{4\alpha^2(1-\delta)}\ln \frac{1}{\delta}\leq 0$), and second inequality is due to $\epsilon\leq \sqrt{\delta}$. Now, we have two cases:
\begin{itemize}
\item If $\alpha\geq 2\frac{1-\sqrt \delta}{1-\delta}-1$, then $\epsilon^*\leq \sqrt{\delta}$ and hence the maximum of $\Delta(\epsilon)$ happens at $\epsilon=\epsilon^*$. In this case, using inequality (\ref{ineq1}) we have
\begin{equation}
\forall \epsilon\in[0,\sqrt \delta]: \frac{\textrm{APX}_\delta(\alpha,\alpha-1+\epsilon)}{\textrm{Max-Surplus}_\delta}\leq \frac{2(1-\sqrt \delta)}{\left (2\frac{1-\sqrt \delta}{1-\delta}-1\right )^2\ln(1/\delta)}\overset{\delta\rightarrow 0} {\longrightarrow }0,
\end{equation}
\item If $\alpha< 2\frac{1-\sqrt \delta}{1-\delta}-1$, then $\epsilon^*> \sqrt \delta$ and hence the maximum of $\Delta(\epsilon)$ happens at $\epsilon=\sqrt \delta$. In this case we have
\begin{align}
\label{ineq2}
\forall \epsilon\in[0,\sqrt \delta]: ~ & \frac{\textrm{APX}_\delta(\alpha,\alpha-1+\epsilon)}{\textrm{Max-Surplus}_\delta}\leq \frac{(\alpha-1)}{2\alpha^2}+\frac{\Delta(\sqrt \delta)}{\frac{\delta}{2(1-\delta)}\ln\frac{1}{\delta}}\nonumber\\&=\frac{(\alpha-1)}{2\alpha^2}+\frac{2(1-\sqrt \delta)}{\alpha^2\ln\frac{1}{\delta}}-\frac{(\alpha+1)(1-\delta)}{2\alpha^2\ln \frac{1}{\delta}}.
\end{align}
Suppose $\alpha^*(\delta)$ be the $\alpha$ that maximizes the upperbound on revenue in (\ref{ineq2}). If $\forall C>0, \exists \delta_C$ s.t. if $\delta\leq \delta_C$ then $\alpha^*(\delta)\geq \left (\frac{C}{\ln\frac{1}{\delta}}\right )^{1/2}$, then we would have
\begin{align}
\label{ineq3}
\forall C>0, \forall \epsilon\in\left [0,\sqrt \delta\right ], \delta\in [0,\delta_C]:~&\frac{\textrm{APX}_\delta(\alpha,\alpha-1+\epsilon)}{\textrm{Max-Surplus}_\delta}\leq  
\frac{2(1-\sqrt \delta)}{\alpha^*(\delta)^2\ln\frac{1}{\delta}}\leq \frac{2}{C}.
\end{align}
As the above uppderbound holds for all $C>0$, 
so the ratio goes to zero as $\delta$ goes to zero.

Now, suppose $\exists C_0$ s.t. $\forall \delta_0$,
$\exists \delta<\delta_0$ s.t.
 $\alpha^*(\delta) < \left (\frac{C_0}{\ln\frac{1}{\delta}}\right )^{1/2}$. From (\ref{ineq2}) we have:
\begin{align}
\label{ineq4}
\forall \epsilon\in\left [0,\sqrt \delta\right ]: ~ & \frac{\textrm{APX}_\delta(\alpha,\alpha-1+\epsilon)}{\textrm{Max-Surplus}_\delta}\leq 
\frac{(\alpha^*(\delta)-1)}{2\alpha^*(\delta)^2}+\frac{2(1-\sqrt \delta)}{\alpha^*(\delta)^2\ln\frac{1}{\delta}}\nonumber\\
&\leq\frac{1}{2\alpha^*(\delta)^2}\left(\frac{4(1-
\sqrt\delta)}{\ln \frac{1}{\delta}}+\left (\frac{C_0}{\ln\frac{1}{\delta}}\right )^{1/2}-1\right).
\end{align}
We can find arbitrarily small $\delta$ such that 
 $\alpha^*(\delta) < \left (\frac{C_0}{\ln\frac{1}{\delta}}\right )^{1/2}$, 
in which case
this upperbound is a negative number. Thus, 
we know the ratio can be arbitrarily small.
\end{itemize}
\item \textit{ Case 2} ($\epsilon> \sqrt{\delta}$): In this case we have  $\xi=\epsilon$, and hence 
\begin{equation}
\textrm{APX}_\delta(\alpha, \alpha-1+\epsilon)=\frac{\delta}{(1-\delta)4\alpha^2}\gamma(\epsilon,\alpha),
\end{equation}
where $\gamma(\epsilon,\alpha)=2(\alpha-1)\ln \frac{1}{\epsilon}+4(1-\epsilon)-(\alpha+1)(1-\epsilon^2).$ We now investigate the choice of $\epsilon$ that maximizes $\gamma$ for a fixed $\alpha$.  We have
\begin{equation}
 \frac{\partial\gamma}{\partial\epsilon}=\frac{2(1-\alpha)}{\epsilon}-4+2(\alpha+1)\epsilon=\frac{2}{\epsilon}(1-\alpha-2\epsilon+(\alpha+1)\epsilon^2).
\end{equation} 
 Roots of $\frac{\partial\gamma}{\partial\epsilon}$, which are candidates for local extremum, are $\epsilon_1=1$ and $\epsilon_2=\frac{1-\alpha}{1+\alpha}$. We know if $\epsilon\geq 1$, then the mechanism cannot get any revenue as the interval $[1-\epsilon,0]$ is outside of the support of the seller's distribution. So, maximum of $\gamma(\alpha,\epsilon)$ for any fixed $\alpha$ over $\epsilon\in [\sqrt \delta,\infty]$ either happens at $\epsilon= \sqrt \delta$ or $\epsilon=\epsilon_2$. If maximum happens at $\epsilon=\sqrt \delta$ then the analysis will be the same as Case 1 and we are done. Otherwise, assume maximum happens at $\epsilon=\frac{1-\alpha}{1+\alpha}$. We have
 \begin{align}
 \label{ineq5}
&\frac{ \textrm{APX}_\delta(\alpha,\alpha-1+\epsilon)\lvert_{\epsilon=\frac{1-\alpha}{1+\alpha}}}{\textrm{Max-Surplus}_\delta}\nonumber\\
&=\frac{\frac{\delta}{(1-\delta)4\alpha^2}\gamma(\alpha,\frac{1-\alpha}{1+\alpha})}{\frac{\delta}{2(1-\delta)}\ln\frac{1}{\delta}}=\frac{1}{2\alpha^2\ln\frac{1}{\delta}}\left(  2(\alpha-1)\ln\left (\frac{1+\alpha}{1-\alpha}\right )+\frac{4\alpha}{\alpha+1}\right)\nonumber\\
&\leq \frac{1}{\alpha^2\ln\frac{1}{\delta}}\left((\alpha-1)\ln\left (1+\frac{2\alpha}{1-\alpha}\right )+2\alpha\right)\leq \frac{2\alpha+(\alpha-1)\ln(1+2\alpha)}{\alpha^2\ln\frac{1}{\delta}}.
\end{align}
\end{itemize}
Suppose $\alpha^*(\delta)$ be the $\alpha$ that maximizes the upperbound on revenue in (\ref{ineq5}) for a particular $\delta$. If $\forall C>0, \exists \delta_C$ s.t. if $\delta\in [0,\delta_C]$ then $\alpha^*(\delta)\geq \frac{C}{\ln\frac{1}{\delta}}$, using (\ref{ineq5}) we would have
\begin{align}
\frac{ \textrm{APX}_\delta(\alpha,\alpha-1+\epsilon)\lvert_{\epsilon=\frac{1-\alpha}{1+\alpha}}}{\textrm{Max-Surplus}_\delta}\leq \frac{1}{\alpha^*(\delta)\ln\frac{1}{\delta}}\leq \frac{1}{C}.
\end{align}
Since $C$ is arbitrary, we know the ratio goes to zero as $\delta$ goes to 
zero.

Now suppose $\exists C_0$ s.t. $\forall\delta_0, \exists \delta<\delta_0$ such that $\alpha^*(\delta)<\frac{C_0}{\ln\frac{1}{\delta}}$. We now have
\begin{align}
&\frac{ \textrm{APX}_\delta(\alpha,\alpha-1+\epsilon)\lvert_{\epsilon=\frac{1-\alpha}{1+\alpha}}}{\textrm{Max-Surplus}_\delta}\leq \frac{2\alpha^*(\delta)+(\alpha^*(\delta)-1)\ln(1+2\alpha^*(\delta))}{\alpha^*(\delta)^2\ln\frac{1}{\delta}}\\
\overset{\delta \rightarrow 0}{\approx}&
\frac{
2\alpha^*(\delta)+
(\alpha^*(\delta)-1)(2\alpha^*(\delta)-2\alpha^*(\delta)^2)
}{\alpha^*(\delta)^2\ln\frac{1}{\delta}}=
\frac{4}{\ln{\frac{1}{\delta}}}\overset{\delta \rightarrow 0}{\rightarrow} 0.
\end{align}
The third expression is obtained using Taylor expansion 
for $\ln(1+x)$ at $x=0$.
When $\delta$ goes to zero, 
we can always find a corresponding
$\alpha^*(\delta)$ going to zero ( $\forall\delta_0,\exists\delta<\delta_0, \alpha^*(\delta)<\frac{C_0}{\ln\frac{1}{\delta}}$).
So we may ignore the $o(\alpha^*(\delta)^2)$.\qed
\end{proof}

\begin{proof}[Proof of Theorem~\ref{theorem:priorindinapprox}]
The ``if" direction has been proved in Corollary~\ref{coro:powerdistribution}.
The proof of ``only if" direction is as follows.
First, notice that when buyer's distribution is 
uniform $[0,1]$, then 
$\phi_\mathcal{B}(v)=2v-1$,
$P(c)=\frac{c+\beta}{2\alpha}+\frac{1}{2}$.
Assume that $1-\epsilon=\alpha-\beta$. 
Since $P(c)\leq 1$, we have 
$1-\epsilon=\alpha-\beta\geq c$. First we compute $\textrm{Max-Surplus}$, and then
$\textrm{Rev-APX}$,
\begin{align*}
\textrm{Max-Surplus}=
\int_0^1 \left ( 
\int_c^1
(1-v) dv\right ) g(c) dc
=
\int_0^1 
\frac{(1-c)^2}{2}
 d G(c)
=
\int_0^1
G(c) (1-c) dc,
\end{align*}
\begin{align*}
\textrm{Rev-APX}(\alpha,\beta)&=\mathbb{E}_{v,c}\Big\{w(P(c))\mathds{1}\{v\geq P(c)\}\Big\}=\mathbb{E}_{c}\left \{\Big((1-\alpha)P(c)+\beta \Big)\Big(1-F(P(c))\Big)\right \}\\
&=
\mathbb{E}_{c}\left \{
\left ((1-\alpha)\left ( \frac{c+\beta}{2\alpha}
+\frac{\beta}{2}\right )+\beta \right )\Big(1-F(P(c))\Big)\right \}\\
&=
\mathbb{E}_{c}\left \{
\left (
\frac{(1-\alpha)(c+\alpha)+(1+\alpha)(\alpha-1+\epsilon)}{2\alpha}
\right )\left (\frac{\alpha-c-\beta}{2\alpha}\right )\right \}\\
&=
\frac{1}{4\alpha^2}
\int_0^{1-\epsilon}
\Big((1-\alpha)(c-1)+ (1+\alpha)\epsilon\Big)(1-\epsilon-c)
g(c) dc\\
&=
\frac{1}{4\alpha^2}
\Big((1-\alpha)(c-1)+ (1+\alpha)\epsilon\Big)(1-\epsilon-c)
G(c)|_0^{1-\epsilon}-\\
&~~\int_0^{1-\epsilon}
G(c) \Big ((1-\alpha)(1-\epsilon-c) -
(1-\alpha)(c-1)-(1+\alpha)\epsilon \Big )dc\\
&=
\frac{-1}{4\alpha^2}
\int_0^{1-\epsilon}
G(c) \Big((1-\alpha)(1-\epsilon-c) -
(1-\alpha)(c-1)-(1+\alpha)\epsilon)\Big)dc\\
&=
\frac{1}{2\alpha^2}
\int_0^{1-\epsilon}
G(c) \Big (
(\alpha-1)(1-c)+\epsilon
\Big )dc\nonumber\\
&=
\frac{\alpha-1}{2\alpha^2}
\int_0^{1-\epsilon}
G(c) (1-c)dc
+
\frac{\epsilon}{2\alpha^2}
\int_0^{1-\epsilon}
G(c) dc.
\end{align*}
When $\epsilon=0$, which means 
$\alpha-\beta=1$, $\textrm{Rev-APX}(\alpha,\beta)$
is $\frac{2\alpha^2}{\alpha-1}$ approximation
to Surplus. When $\alpha=2$, it gets the maximum
approximation ratio of $8$. 
If $\epsilon>0$, 
then we may consider a distribution $G$
which is supported at $(1-\epsilon,1]$, then 
we know 
$\textrm{Rev-APX}(\alpha,\beta)=0$, while
Max-Surplus is positive. So it could not approximate
Max-Surplus in this distribution.
If $\epsilon<0$, we consider two cases. 
If $\alpha\geq 1$, then 
we consider a distribution $G$ which is 
uniform on $[1+\epsilon/2,1]$, so 
we know $\int_0^{1-\epsilon/2} G(c) (1-c) dc\leq 0$, 
so $\int_0^{1-\epsilon} G(c)(1-c) dc \leq 0$, 
which means $\textrm{Rev-APX}(\alpha,\beta)\leq 0$, 
but maximum surplus is positive. 
If $\alpha<1$, then 
since $\alpha-\beta=1-\epsilon>1$, so $\beta<0$. 
Consider the distribution $G$ which is 
uniform on $[0,-\beta/2]$, then 
we know $w(P(c))=(1-\alpha)(\frac{c+\beta}{2\alpha}+\frac{\beta}{2})+\beta<0$, so the intermediary
could extract no revenue in this case. \qed
\end{proof}
\section{Proof of Lemma~\ref{iidMHR}}
\label{lemmaproofs:appendix}
\begin{proof}
Let $G(x)$ be the CDF of random variable $v=\max v_i$. We have $G(x)=(F(x))^n$ and $g(x)=nf(x)(F(x))^{(n-1)}$. Let $\tilde{h}$ be the hazard rate of $v$, hence
\begin{equation}
\tilde{h}(x)=\frac{g(x)}{1-G(x)}=n\frac{nf(x)(F(x))^{(n-1)}}{1-(F(x))^n}=n\frac{f(x)}{1-F(x)}\frac{1}{\sum_{i=0}^{n-1} (1/F(x))^i}
\end{equation}
which is non-decreasing as $h(x)=\frac{f(x)}{1-F(x)}$ is non-decreasing and $F(x)$ is non-decreasing.\qed
\end{proof}
\section{Conclusions and open questions}
In this paper we studied the problem of simple affine fee-setting mechanisms versus optimal intermediary mechanisms in the setting of 1-seller 1-buyer exchange. Our result complements the already existing result on optimality of affine fee-setting mechanisms when seller has an affine virtual cost function. In fact, we showed that under some technical assumptions, if the buyer has affine virtual value function there exist an affine fee-setting mechanism that extracts a constant approximation of optimal intermediary profit. Moreover, we showed if buyer's value is MHR and the difference between buyer's value and seller's cost is MHR, then we get constant approximation to both surplus and revenue by a constant fee-schedule mechanism.
Next, we provided inapproximability results  by showing that proper affine fee-setting mechanisms (e.g. those used in eBay and Amazon selling plans) are \emph{unable} to extract a constant fraction of optimal profit in the worst-case seller distribution. As subsidiary results we also show there exists a constant gap between maximum surplus and maximum revenue under the aforementioned conditions. Most of the mechanisms that we propose are also prior-independent with respect to the seller, which signifies the practical implications of our result.

There are many open questions left that  might be interesting for future works on this topic:
\begin{itemize}
\item Can we extend the results to the case where there are multiple sellers? 
\item As  has been conjectured in ~\cite{LN07}, affine fee-setting mechanism seem to get a good fraction of optimal revenue even under worst-case distributions of both buyer and seller. Can the proof techniques provided in this paper be used to solve that problem?
\item Can we generalize techniques provided in this paper to other exchange environments such as multi-item environments?
\end{itemize}

\end{document}